\newcites{oa}{References}
\numberwithin{equation}{section}
\newcolumntype{P}[1]{>{\centering\arraybackslash}p{#1}}
\newcolumntype{d}[1]{D..{#1}} % for alignment of numbers on decimal marker
\titleformat{\section}{\large\bfseries}{\thesection}{1em}{}
\titleformat{\paragraph}{\normalfont\normalsize\bfseries}{\theparagraph}{0.7em}{}
\titlespacing*{\paragraph}{0pt}{1ex plus 1ex minus .2ex}{1pt}
\newtheorem{theorem}{Theorem}[section]
\newtheorem{proposition}{Proposition}[section]
\newtheorem{lemma}{Lemma}[section]
\newcommand{\neutralize}[1]{\expandafter\let\csname c@#1\endcsname\count@}
\newtheorem{assumption}{Assumption}[section]
\newtheorem{example}{Example}[section]
\let\endtitlepage\relax
\newenvironment{mytitlepage}%
{\begin{titlepage}\def\@thanks{}}%
	{\end{titlepage}}
\xpatchcmd\titlepage{\setcounter{page}\@ne}{}{}{}
\xpatchcmd\endtitlepage{\setcounter{page}\@ne}{}{}{}
\newcommand{\ostar}{\mathbin{\mathpalette\make@circled\star}}
\newcommand{\make@circled}[2]{%
	\ooalign{$\m@th#1\smallbigcirc{#1}$\cr\hidewidth$\m@th#1#2$\hidewidth\cr}%
}
\newcommand{\smallbigcirc}[1]{%
	\vcenter{\hbox{\scalebox{0.77778}{$\m@th#1\bigcirc$}}}%
}
\DeclareMathOperator{\plim}{plim}
\definecolor{colari}{rgb}{0.7, 0, 0.7}
\definecolor{coland}{rgb}{0, 0.7, 0.4}
\newcommand{\TITLE}{Quantile Peer Effect Models}
\title{\vspace{-2cm}\TITLE\footnote[1]{\fontsize{10pt}{10pt} For comments and suggestions, I am grateful to Yann Bramoullé, Vincent Boucher, Firmin Doko Tchatoka, Mathieu Lambotte, and Marie Aurélie Lapierre. This research uses data from the National Longitudinal Study of Adolescent to Adult Health (Add Health), a program that is directed by Kathleen Mullan Harris and designed by J. Richard Udry, Peter S. Bearman, and Kathleen Mullan Harris at the University of North Carolina at Chapel Hill, and funded by Grant P01-HD31921 from the Eunice Kennedy Shriver National Institute of Child Health and Human Development, with cooperative funding from 23 other US federal agencies and foundations. Special acknowledgment is given to Ronald R. Rindfuss and Barbara Entwisle for assistance in the original design. Information on how to obtain Add Health data files is available on the Add Health website (\url{www.cpc.unc.edu/addhealth}). No direct support was received from Grant P01-HD31921 for this research.\\ 
		An R package, including all replication codes, is available at: \url{https://github.com/ahoundetoungan/QuantilePeer}.}}
\author{Aristide Houndetoungan}
\affil{\normalsize\emph{CY Cergy Paris Université}\\
	\href{mailto:aristide.houndetoungan@cyu.fr}{aristide.houndetoungan@cyu.fr}\vspace{-2pt}}
\date{\normalsize May 2025}
\begin{document}
	\setlength{\abovedisplayskip}{4pt}
	\setlength{\belowdisplayskip}{4pt}
	\begin{mytitlepage}
		\maketitle
		
		\vspace{-0.8cm}
		\begin{abstract}
			\noindent 
			{\linespread{1.1}\selectfont
				I propose a flexible structural model to estimate peer effects across various quantiles of the peer outcome distribution. The model allows peers with low, intermediate, and high outcomes to exert distinct influences, thereby capturing more nuanced patterns of peer effects than standard approaches that are based on aggregate measures. I establish the existence and uniqueness of the Nash equilibrium and demonstrate that the model parameters can be estimated using a straightforward instrumental variable strategy. Applying the model to a range of outcomes that are commonly studied in the literature, I uncover diverse and rich patterns of peer influences that challenge assumptions inherent in standard models. These findings carry important policy implications: key player status in a network depends not only on network structure, but also on the distribution of outcomes within the population.

				\vspace{0.5cm}
				\noindent \textbf{Keywords}: Peer effects, Structural estimation, Quantiles, Key players
				
				%\vspace{0.2cm}
				%\noindent \textbf{JEL Classification}: 
			}
		\end{abstract}
	\end{mytitlepage}

	\newpage
	
	\section{Introduction}
	Following the seminal work of \cite{manski1993identification} on social interactions, numerous studies have examined how peers (or friends) influence individual behavior. The linear-in-means (LIM) model, which is commonly used in these analyses, is valued for its simplicity and well-established identification properties \citep{bramoulle2009identification, blume2015linear}. However, it captures peer influence solely through average peer outcome, thereby imposing the restrictive assumption that all peers exert the same effect, regardless of whether their outcomes are low or high. Such a strong restriction can lead to inaccurate counterfactual analyses and reduce the effectiveness of targeted policy interventions. %Recent contributions aimed at relaxing this assumption rely on specifications that still impose strong restrictions on the structure of peer effects or use fully parameterized (saturated) models in which each peer exerts a distinct effect. The latter often face identification challenges due to the large number of parameters involved.

	In this paper, I develop a structural model to estimate the effects of different quantiles of peer outcomes on individual behavior. The model allows peers with low, intermediate, and high outcomes to exert distinct effects. As is the case in \cite{boucher2024toward}, peer effects may arise from spillover effects, conformity effects, or a combination of the two. I establish the existence and uniqueness of a Nash equilibrium, despite the nonsmooth nature of the quantile function. I also show that the structural parameters can be estimated using a straightforward instrumental variable (IV) approach. Instruments for the quantiles of peer outcomes can be constructed from the quantiles of peers’ and peers-of-peers’ exogenous characteristics. I conduct an empirical analysis, showing that key players are likely to be misidentified by models that impose strong restrictions on the structure of peer effects. 
	
	%The model is flexible enough to capture heterogeneous effects across different points of the peer outcome distribution. 
	The sample quantile of peer outcomes serves as a local average around a given point in the peer outcome distribution. Measuring peer effects through these local averages at various quantile levels provides an efficient and straightforward way to characterize the relationship flexibly between an agent's outcome and those of their peers. Since outcomes are generally bounded in practice, a large number of quantile levels is not required for commonly observed network structures. This substantially reduces the number of parameters that must be estimated compared to fully parameterized (saturated) models, in which each peer exerts a distinct effect \citep{herstad2025identification}. Yet, selecting the relevant quantile levels is crucial to avoid misspecification. To address this, I develop an encompassing test that formally compares alternative specifications and helps identify the most appropriate set of quantile levels. As a practical tool for researchers, I also provide an easy-to-use R package, \texttt{QuantilePeer}, which includes routines that implement all of the methods developed in this paper.\footnote{The package is available at \href{https://github.com/ahoundetoungan/QuantilePeer}{https://github.com/ahoundetoungan/QuantilePeer}.}

	The model generalizes several specifications that measure peer effects at a single point in the peer outcome distribution, such as the minimum or maximum \citep{tao2014social, tatsi2015endogenous}. The standard LIM model also emerges as a special case, given that it implicitly assumes that all quantiles of peer outcomes exert the same effect. 
	
	Additionally, the proposed model offers greater flexibility than the specification of \cite{boucher2024toward}, which relies on a constant elasticity of substitution (CES) function. Because the CES function is always convex or concave, their specification can capture only monotonic peer effects; i.e., effects that increase, decrease, or remain constant as peer outcomes vary from low to high. However, it cannot accommodate non-monotonic influence patterns, in which mid-level outcome peers are the most influential, while those with extreme values have weaker effects, and vice versa, despite the potential prevalence of such patterns in many settings. Indeed, theories of social comparison \citep{festinger1954theory} suggest that aligning with moderately performing peers, rather than with outliers, can be socially or psychologically optimal. Such non-monotonic patterns can be captured by the quantile peer effect model.

	Despite the model’s flexibility, it remains linear in parameters, ensuring straightforward estimation. To illustrate this flexibility, I conduct a simulation study in which agents have up to ten friends. I show that the proposed model can accommodate rich peer effect structures that are not captured by the standard LIM or CES-based specifications. Moreover, in this simulation setting, saturated models would require a large number of coefficients, and the resulting estimates of peer effect parameters would exhibit high variance \citep[see][]{herstad2025identification}.

	I also present an empirical application using data from the National Longitudinal Study of Adolescent to Adult Health (Add Health). As is the case in the simulation study, individuals have up to ten friends. While estimating saturated models can be challenging, the encompassing test shows that three or four uniformly spaced quantile levels over the interval $[0, 1]$ are sufficient to capture peer effects flexibly. I examine several outcomes and find that a non-monotonic pattern of peer effects is observed for many outcomes, including academic achievement, participation in extracurricular activities, smoking, drinking, risky behaviors, and physical exercise. These findings align with social comparison theories and offer additional insights into the results that are obtained using the LIM and CES models.

	Using the empirical results, I identify key players as the nodes the removal of which from the network leads to the largest changes in the outcome distribution. Identifying such nodes is important, for example, for a social planner who is implementing targeted policies \citep[see][]{ballester2006s, lee2021key}. The results indicate that node rankings based on influence differ between the proposed model and the LIM and CES specifications, especially for outcomes that are characterized by non-monotonic peer effects, which these specifications fail to capture. Analyzing such outcomes using the LIM or CES models can distort the identification of key players, thereby reducing the effectiveness of targeted policy interventions.

	This paper contributes to the extensive literature on the identification and estimation of peer effects \citep{manski1993identification, lee2004asymptotic, bramoulle2009identification, blume2015linear, boucher2020estimating, lin2024quantile}. In particular, it engages with recent developments in semiparametric and nonlinear models of social interactions. One strand of this literature estimates peer effects based on latent traits or observable characteristics, such as race or gender \citep[e.g.,][]{comola2022heterogeneous, boucher2022peer, houndetoungan2024count}. Another approach models peer influence through nonlinear functions of peer outcomes, including the maximum, minimum, or median \citep{tao2014social, tatsi2015endogenous, diaz2021leaders}. Saturated models instead assign a distinct effect to each peer, but they often become impractical in dense networks due to the large number of parameters that are involved \citep{peng2019heterogeneous}. I contribute to this literature by generalizing aggregated social norm-based specifications. Despite its flexibility, the model remains linear in parameters and requires only a small number of coefficients, making it straightforward to estimate relative to saturated models.

	The two closest papers to this work are \cite{boucher2024toward} and \cite{herstad2025identification}. The former employs a CES function to link an agent’s outcome to those of their peers, while the latter estimates rank-dependent peer effects using a saturated model in which each peer exerts a distinct effect. The CES specification imposes a monotonic peer effect, while my model can capture richer patterns. The saturated model involves a large number of parameters, making estimation challenging in certain settings. In my empirical application, where agents have up to ten friends, the saturated model would require the estimation of 55 peer effect parameters, while my specification involves only three or four. Targeting quantiles of the peer outcome distribution, rather than modelling each peer individually, offers a more parsimonious representation while preserving flexibility. 
	
	%The encompassing test offers a practical tool to guide the selection of quantile levels. %The proposed model provides a practical alternative when peer effects are non-monotonic or when individuals belong to large peer groups.

	The remainder of the paper is organized as follows. Section \ref{sec:quantPE} introduces quantile peer effects models. Section \ref{sec:game} presents the microeconomic foundations of these models. Section \ref{sec:econometrics} addresses the identification and estimation of the model parameters, followed by a Monte Carlo study in Section \ref{sec:simulations}.
	Section \ref{sec:application} demonstrates the effectiveness of quantile peer effects models through an empirical application, and Section \ref{sec:conclusion} concludes the paper.

	\section{Quantile Peer Effects}\label{sec:quantPE}
	I consider a set $\mathcal{N}$ of $n$ agents that are indexed by $i \in [1, n]$. Agents are connected through a network that is represented by an adjacency matrix $\mathbf{G} = [g_{ij}]$ of dimension $n \times n$, where $g_{ij} = 1$ if agent $j$ is a friend (or peer) of agent $i$, and $g_{ij} = 0$ otherwise.\footnote{In weighted networks, $g_{ij}$ can take nonnegative values (not necessarily binary) to reflect the intensity of the outgoing link from $i$ to $j$. The results derived in this paper also apply to such networks.} The network may be directed; i.e., $g_{ij} = 1$ does not necessarily imply $g_{ji} = 1$. Self-links are not permitted: $g_{ii} = 0$ for all $i$. The outcome of agent $i$ is denoted by $y_i$.

	The linear-in-means (LIM) model is widely used to analyze how peers influence agent $i$'s outcome \citep[see][]{bramoulle2020peer}. In this model, $y_i$ is specified as:
	\begin{equation}\label{eq:LIM}
		y_i = c + \lambda \frac{\sum_{j = 1}^n g_{ij} y_j}{\sum_{j = 1}^n g_{ij}} + \boldsymbol x_i^{\prime}\boldsymbol\gamma + \eta_i,
	\end{equation}
	where $\boldsymbol{x}_i$ is a vector of exogenous characteristics, $\eta_i$ is an error term, and $c$, $\lambda$, and $\boldsymbol{\gamma}$ are unknown parameters. The term $\dfrac{\sum_{j = 1}^n g_{ij} y_j}{\sum_{j = 1}^n g_{ij}}$ represents the average outcome among peers. The parameter $\lambda$ captures endogenous peer effects, measuring the effect of a one-unit increase in the average peer outcome on the agent's outcome.
	
	I refer to Equation \eqref{eq:LIM} as the standard LIM model. This model relies on the restrictive assumption that all peers exert the same influence on agent $i$, regardless of whether the peers' outcome values are high or low. The influence of each peer $j$ on agent $i$ is given by $\frac{\partial y_i}{\partial y_j} = \lambda / \sum_{j = 1}^n g_{ij}$. To relax this assumption, I allow peer effects to vary across quantiles of peer outcomes.

	For any quantile level $\tau \in [0,1]$, I define the $\tau$-quantile outcome among peers as:
	\begin{equation} \label{eq:defqtaui}
		q_{\tau,i}(\mathbf y_{-i}) = Q_{\tau}\{y_j; ~ g_{ij} > 0\},
	\end{equation}
	where $y_{-i} = (y_1, y_2, \dots, y_{i - 1}, y_{i + 1}, \dots, y_n)^{\prime}$ is the vector of outcomes for individuals other than $i$, and $Q_{\tau}$ denotes the sample quantile at level $\tau$. Specifically, $q_{\tau,i}(\mathbf{y}_{-i})$ is the smallest value, such that at least a proportion $\tau$ of $i$'s peers have outcomes less than or equal to it. For instance, $q_{0,i}(\mathbf y_{-i})$ corresponds to the smallest peer outcome, $q_{1,i}(\mathbf y_{-i})$ to the largest peer outcome, and $q_{0.5,i}(\mathbf y_{-i})$ to the median peer outcome. For weighted networks, $Q_{\tau}$ can be a weighted quantile, where the weight that is assigned to $y_j$ is $g_{ij}$.\footnote{The literature commonly distinguishes nine types of sample quantiles \citep[see][]{hyndman1996sample}. The results that are derived in this paper hold for all of these types. Both the simulations and empirical analysis use \textit{Type 7}, which relies on linear interpolation when the quantile level does not correspond exactly to a peer's rank. For example, when an agent has only two friends, the sample median of peer outcomes is simply the average of the two values. The first decile is a weighted average of the two friends, with the friend who has the lower outcome receiving a weight of 0.9.} The sample quantile can be expressed as a weighted average of two peer outcomes, with the weights endogenously determined by the distribution of peer outcomes (see Appendix \ref{Append:sampleQ}). 
	
	Let $\mathcal{T} \subset [0, 1]$ be a finite set of quantile levels. I propose the following specification:
	\begin{equation}
		y_i = c + \sum_{\tau \in \mathcal{T}} \lambda_{\tau} q_{\tau,i}(\mathbf y_{-i}) + \boldsymbol x_i^{\prime}\boldsymbol\beta + \varepsilon_i, \label{eq:model:q}
	\end{equation}
	where $\lambda_{\tau}$ captures the effect of the $\tau$-quantile of peer outcomes on $y_i$, $\varepsilon_i$ is an error term, and $\boldsymbol\beta$ is a vector of parameters.  The quantile set $\mathcal{T}$ is chosen by the researcher. Importantly, misspecifying $\mathcal{T}$ can lead to incorrect counterfactual analyses. In Section~\ref{sec:econometrics:tests}, I propose an encompassing test to guide the selection of relevant quantile levels. It is important to select quantile levels uniformly over $[0, 1]$ to capture the effects of peers with both low and high outcomes.

	By including multiple values in $\mathcal{T}$, the model generalizes various specifications that measure peer effects at a single point in the distribution, such as the minimum and maximum \citep{tao2014social, tatsi2015endogenous}. The standard LIM model can also be seen as a special case, given that it implicitly assumes that $\lambda_{\tau}$ is constant.\footnote{Specifically, if $\lambda_{\tau} = \bar\lambda$ for all $\tau$, then $\sum_{\tau \in \mathcal{T}} \lambda_{\tau} q_{\tau,i}(\mathbf y_{-i}) = d_{t}\bar\lambda \left(\frac{1}{d_{t}}\sum_{\tau \in \mathcal{T}}q_{\tau,i}(\mathbf y_{-i})\right)$, where $d_{t}=\lvert \mathcal{T}\rvert$. If the quantile levels are chosen uniformly over $[0,1]$, so that $\frac{1}{d_{t}}\sum_{\tau \in \mathcal{T}}q_{\tau,i}(\mathbf y_{-i})$ approximates the average peer outcome, then $d_{t}\bar\lambda$ approximates the peer effect parameter in the LIM model.}

	Furthermore, the proposed model captures peer effects more flexibly than does the specification by \cite{boucher2024toward}, which replaces the average peer outcome in the LIM model with the following constant elasticity of substitution (CES) function:
	$$
	CES_i(\mathbf{y}_{-i}) = \left(\sum_{j=1}^{n} g_{ij} y_j^{\rho}\right)^{\frac{1}{\rho}}.
	$$
	The CES specification assigns greater weight to peers with high outcomes when $\rho > 1$, greater weight to peers with low outcomes when $\rho < 1$, and equal weight to all peers when $\rho = 1$, allowing peer effects to increase, decrease, or remain constant as peer outcomes vary from low to high.\footnote{This follows from the fact that $CES_i(\mathbf{y}_{-i})$ is strictly convex when $\rho > 1$, strictly concave when $\rho < 1$, and linear when $\rho =1$.} However, it does not account for potential non-monotonic influence patterns that may arise in certain settings. These include cases where peers with moderate outcomes exert greater influence than those with extreme values (high or low), and vice versa.

	\section{Microfoundations}\label{sec:game}
	In this section, I present the microfoundations of the model using a complete information game. Agent $i$ selects a strategy $y_i$ (e.g., academic effort) and derives utility from the distribution of their peers' strategies. Preferences are represented by a standard linear-quadratic utility function, which is commonly used in the context of the LIM model \citep[see][]{ballester2006s, calvo2009peer, de2017econometrics, bramoulle2020peer}, in which I replace the average peer effort with quantiles of the peer effort distribution.
	
	A well-known issue with the utility function, even in the LIM model, is that the social benefit term can be defined subjectively to capture either spillover effects or conformity \citep[see][]{boucher2016some}. I adopt the approach that was proposed by \cite{boucher2024toward}, which incorporates both spillover and conformity into the model, allowing their relative importance to be empirically identified from the data \citep[see also][]{lambotte2025peer}. The utility function is given by:
	\begin{equation}\label{eq:utility}
		U_i(y_i, ~ \mathbf y_{-i}) = \underbrace{\alpha_i y_i - \frac{1}{2}y_i^2}_{\text{private benefit}} + \underbrace{\sum_{\tau \in \mathcal T}\theta_{\tau,1} q_{i,\tau}(\mathbf y_{-i})y_i - \frac{1}{2}\sum_{\tau \in \mathcal T}\theta_{\tau,2}(y_i - q_{i,\tau}(\mathbf y_{-i}))^2}_{\text{social benefit}},
	\end{equation}
	where $\alpha_i$ is agent $i$'s characteristic (type) observable to all players, and $\theta_{\tau,1} \in \mathbb{R}$ and $\theta_{\tau,2} \geq 0$.\footnote{When $i$ has no friends (i.e., no outgoing links), the utility function excludes the social benefit term and is given by $U_i(y_i, \mathbf{y}_{-i}) = \alpha_i y_i - \frac{1}{2} y_i^2$. The corresponding optimal strategy is therefore $y_i = \alpha_i$. An individual without friends is referred to as isolated. However, in directed networks, such individuals may still be named as friends by others, i.e., they may have incoming links.} In the literature, the type $\alpha_i$ is generally specified as a linear function of observable characteristics $\boldsymbol{x}_i$ and an error term  \cite[e.g., see][]{blume2015linear}.

	The utility function is additively separable into a private benefit, which depends on agent $i$'s effort $y_i$, and a social benefit, which is a function of both $y_i$ and the effort vector of other agents, $\mathbf{y}_{-i}$. The marginal private utility is linear in the agent’s type $\alpha_i$, and the private cost (or disutility) of exerting effort $y_i$ is given by $\dfrac{1}{2} y_i^2$.

	The social benefit reflects both spillover effects (strategic complementarity or substitution) and conformity between an agent’s effort and the efforts of their peers. The term $\theta_{\tau,1} q_{i,\tau}(\mathbf y_{-i})y_i$ implies strategic complementarity between agent $i$’s effort and the $\tau$-quantile of peer efforts when $\theta_{\tau,1} > 0$. Conversely, if $\theta_{\tau,1} < 0$, then $\theta_{\tau,1} q_{i,\tau}(\mathbf y_{-i})y_i$ reflects strategic substitution (i.e., negative spillover) between efforts \citep{calvo2009peer, bramoulle2014strategic, houndetoungan2024identifying}. 
	The term $\theta_{\tau,2}(y_i - q_{i,\tau}(\mathbf y_{-i}))^2$ captures conformist preferences \citep{akerlof1997social, jackson2015games, ushchev2020social}. Given that $\theta_{\tau,2}$ is nonnegative, agent $i$ incurs a social cost when their effort deviates from the $\tau$-quantile of peer efforts.\footnote{The case where $\theta_{\tau,2}$ is negative (indicating anticonformity) is rare and is not considered in this paper, as it introduces complications in the analysis. Specifically, when $\theta_{\tau,2} < 0$, the utility function may become convex and lack a maximizer unless the strategy space is compact, in which case corner solutions may arise.} Preferences exhibit only conformity when $\theta_{\tau,1} = 0$ for all $\tau$, and only spillover effects when $\theta_{\tau,2} = 0$ for all $\tau$.
	
	For ease of exposition, I introduce the following notation:
	\begin{align*}
		&\lambda_{\tau,1} = \dfrac{\theta_{\tau,1}}{1 + \sum_{\tau \in \mathcal{T}} \theta_{\tau,2}}, \quad \lambda_{\tau,2} = \dfrac{\theta_{\tau,2}}{1 + \sum_{\tau \in \mathcal{T}} \theta_{\tau,2}}, \quad \text{and} \quad
		\lambda_{\tau} = \lambda_{\tau,1} + \lambda_{\tau,2} \text{ for all } \tau \in \mathcal{T}.\\
		&\lambda_{1} = \sum_{\tau \in \mathcal{T}}\lambda_{\tau,1}\quad \text{and} \quad \lambda_{2} = \sum_{\tau \in \mathcal{T}}\lambda_{\tau,2}.
	\end{align*}
	Agent $i$ chooses their effort $y_i$ by maximizing the utility function $U_i(y_i, \mathbf y_{-i})$. By solving the first-order condition of this maximization problem, I obtain the best response function:
	\begin{equation} \label{BRF}
		BR_i(\mathbf y_{-i}) = (1 - \lambda_{2})\alpha_i + \sum_{\tau \in \mathcal{T}} \lambda_{\tau} q_{i,\tau}(\mathbf y_{-i}).
	\end{equation}
	
	The parameter $\lambda_{\tau}$ measures the total peer effect at the $\tau$-quantile of the peer outcome distribution. This total effect is decomposed as $\lambda_{\tau} = \lambda_{\tau,1} + \lambda_{\tau,2}$, where $\lambda_{\tau,1}$ captures the spillover effect and $\lambda_{\tau,2}$ captures the conformity component. The total spillover effect across all quantiles is given by $\lambda_{1} = \sum_{\tau \in \mathcal{T}} \lambda_{\tau,1}$, while the total conformity parameter is $\lambda_{2} = \sum_{\tau \in \mathcal{T}} \lambda_{\tau,2}$. If $\mathbf y = (y_1, ~\dots, ~y_n)^{\prime}$ corresponds to a Nash equilibrium (NE), then $y_i = BR_i(\mathbf y_{-i})$ for all $i\in\mathcal{N}$. 
	
	The following proposition establishes the existence and uniqueness of the NE.
	\begin{proposition}\label{prop:NE} Assume that the utility function of each individual $i\in\mathcal{N}$ is given by \eqref{eq:utility}, with $\sum_{\tau \in \mathcal{T}}\lvert \lambda_{\tau} \rvert < 1$. Then, there exists a unique Nash equilibrium $\mathbf y^{\ast} = (y_1^{\ast}, ~\dots,   y_n^{\ast})^{\prime}$ such that $y_i^{\ast} = BR_i(\mathbf{y}^{\ast}_{-i})$.
	\end{proposition}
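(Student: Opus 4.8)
The plan is to recognize the Nash equilibrium as a fixed point of the best-response mapping and to apply the Banach fixed-point (contraction mapping) theorem. I would define $\Phi : \mathbb{R}^n \to \mathbb{R}^n$ by $\Phi(\mathbf{y})_i = BR_i(\mathbf{y}_{-i})$, with $BR_i$ as in \eqref{BRF}, so that $\mathbf{y}^{\ast}$ is a Nash equilibrium if and only if $\Phi(\mathbf{y}^{\ast}) = \mathbf{y}^{\ast}$. I would equip $\mathbb{R}^n$ with the supremum norm $\|\mathbf{y}\|_\infty = \max_i |y_i|$, under which it is a complete metric space, and then show that $\Phi$ is a contraction with modulus $\sum_{\tau \in \mathcal{T}} |\lambda_\tau| < 1$. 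Existence and uniqueness of the fixed point, and hence of the Nash equilibrium, would then follow at once.

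The crux — and the step that absorbs the nonsmoothness of the quantile function — is a Lipschitz bound on the sample quantile. First I would establish that, for a fixed level $\tau$ and a fixed peer set, the map $\mathbf{y}_{-i} \mapsto q_{\tau,i}(\mathbf{y}_{-i})$ is $1$-Lipschitz in the supremum norm, i.e.
$$
\bigl| q_{\tau,i}(\mathbf{y}_{-i}) - q_{\tau,i}(\mathbf{y}'_{-i}) \bigr| \le \max_{j : g_{ij} > 0} |y_j - y'_j| \le \|\mathbf{y} - \mathbf{y}'\|_\infty .
$$
To prove this I would use the representation of the sample quantile as a convex combination of at most two order statistics of the peer outcomes, with weights depending only on $\tau$, the peer count, and the quantile type, not on the outcome values. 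Each $k$-th order statistic is itself $1$-Lipschitz in the supremum norm — perturbing every coordinate by at most $\delta$ shifts every order statistic by at most $\delta$ — and a convex combination of $1$-Lipschitz maps is again $1$-Lipschitz, so the bound follows. I expect this to be the main obstacle: because the quantile is not differentiable, the usual spectral-radius arguments for linear-quadratic games do not apply directly, and the Lipschitz property serves as the robust substitute.

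With the quantile bound in hand, the contraction estimate is routine. The term $(1 - \lambda_2)\alpha_i$ is independent of $\mathbf{y}_{-i}$ and cancels, so that for each $i$, applying the triangle inequality and the $1$-Lipschitz bound termwise,
$$
\bigl| \Phi(\mathbf{y})_i - \Phi(\mathbf{y}')_i \bigr| = \Bigl| \sum_{\tau \in \mathcal{T}} \lambda_\tau \bigl( q_{\tau,i}(\mathbf{y}_{-i}) - q_{\tau,i}(\mathbf{y}'_{-i}) \bigr) \Bigr| \le \Bigl( \sum_{\tau \in \mathcal{T}} |\lambda_\tau| \Bigr) \|\mathbf{y} - \mathbf{y}'\|_\infty .
$$
Taking the maximum over $i \in \mathcal{N}$ gives $\|\Phi(\mathbf{y}) - \Phi(\mathbf{y}')\|_\infty \le \bigl( \sum_{\tau \in \mathcal{T}} |\lambda_\tau| \bigr) \|\mathbf{y} - \mathbf{y}'\|_\infty$, and the hypothesis $\sum_{\tau \in \mathcal{T}} |\lambda_\tau| < 1$ makes the modulus strictly below one. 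Thus $\Phi$ is a contraction on the complete space $(\mathbb{R}^n, \|\cdot\|_\infty)$, and the Banach fixed-point theorem yields a unique $\mathbf{y}^{\ast}$ with $y_i^{\ast} = BR_i(\mathbf{y}^{\ast}_{-i})$ for all $i$. Finally, I would note that isolated agents, whose best response is the constant $y_i = \alpha_i$, contribute Lipschitz constant zero and are therefore consistent with the same bound, so they pose no difficulty.
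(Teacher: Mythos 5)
Your proposal is correct and follows essentially the same route as the paper: both identify the Nash equilibrium with a fixed point of the best-response map, establish that the sample quantile is $1$-Lipschitz in the supremum norm by writing it as a convex combination of two order statistics whose weights and indices are outcome-independent, and conclude via the contraction mapping theorem with modulus $\sum_{\tau \in \mathcal{T}} \lvert \lambda_{\tau} \rvert < 1$. The only difference is cosmetic: where you invoke the standard fact that order statistics are $1$-Lipschitz under sup-norm perturbations, the paper proves this from scratch through a rank-matching lemma (Lemma \ref{lemmaRank}) showing that equally ranked components of two vectors differ by at most some componentwise difference.
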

	\noindent One challenge in establishing the uniqueness is the non-differentiability of the quantile function. Nevertheless, I show that $BR(\mathbf y) = (BR_1(\mathbf y_{-1}), ~\dots, ~ BR_n(\mathbf y_{-n}))^{\prime}$ is a contracting mapping (see proof in Appendix \ref{Append:prop:NE}). This result is generalized to weighted networks in Online Appendix \ref{OA:WeightedNet}.

	The condition $\sum_{\tau \in \mathcal{T}} \lvert \lambda_{\tau} \rvert < 1$ ensures that a one-unit increase in peer outcomes does not lead to an increase in agent $i$'s effort greater than one. It rules out large quantile peer effects, which could cause the fixed point of the best response function $BR$ to diverge. When this condition is violated, a stable equilibrium in pure strategies does not generally exist. In the standard LIM model \eqref{eq:LIM}, a similar stability condition is $\lvert \lambda \rvert < 1$.
	
	As the best response function is a contraction, solving the game is straightforward. Under the condition $\sum_{\tau \in \mathcal{T}} \lvert \lambda_{\tau} \rvert < 1$, the Nash equilibrium $\mathbf{y}^\ast$ can be obtained by starting from any initial guess and iteratively applying the best response dynamics. The contraction property ensures convergence to the Nash equilibrium.

	\section{Identification and Estimation of the Model Parameters}\label{sec:econometrics}
	In peer effect models, interdependence between agents within a network can pose identification challenges. To address this issue, I follow the standard approach in the literature and assume that the network comprises multiple independent subnetworks (e.g., schools) that are not connected \citep{brock2007identification, bramoulle2009identification}.
	Let $S$ denote the number of subnetworks, $\mathcal{N}_s$ the set of agents in the $s$-th subnetwork, and $n_s \geq 2$ the number of agents in $\mathcal{N}_s$. I assume that $n_s$ is bounded. Henceforth, individual variables are indexed using a double subscript, $s$ and $i$, to indicate that they refer to agent $i\in\mathcal{N}_s$.

	%\subsection{Endogeneity and Instruments}
	\subsection{From a Theoretical Model to an Econometric Model}\label{sec:econometrics:model}
	In this section, I present the reduced-form equation for the outcome. For notational convenience, I drop the $\mathbf{y}_{s,-i}$ argument in $q_{s,i,\tau}(\mathbf{y}_{s,-i})$ and simply write $q_{s,i,\tau}$. Likewise, I omit the explicit notation $\tau \in \mathcal{T}$ when summing over quantile indices and write $\sum_{\tau}$. The agent type in the utility function \eqref{eq:utility} is specified as:
	\begin{equation}\label{eq:type}
		\alpha_{s,i} = c_s + \boldsymbol{x}_{s,i}^{\prime} \boldsymbol{\beta}_1 + \bar{\boldsymbol{x}}_{s,i}^{\prime} \boldsymbol{\beta}_2 + \varepsilon_{s,i},
	\end{equation}
	where $\boldsymbol{x}_{s,i}$ is a vector of observable characteristics of agent $i$ (e.g., social background), $\bar{\boldsymbol{x}}_{s,i}$ is a vector of exogenous peer characteristics (contextual variables), such as the average age and the share of girls within peers, and $\varepsilon_{s,i}$ is an error term \citep{blume2015linear}. The parameter vector $\boldsymbol{\beta}_1$ captures the influence of observable characteristics on agent types, such as how an increase in age affects a given type, whereas the parameter vector $\boldsymbol{\beta}_2$ captures the effects of contextual variables. Equation \eqref{eq:type} also accounts for unobserved subnetwork-level heterogeneity in agent type through fixed effects, which are represented by the term $c_s$.
	
	Even in the standard LIM model, the presence of \textit{isolated agents} in some subnetworks is necessary to disentangle spillover effects from conformity effects \citep{boucher2024toward}.\footnote{Agent $i$ is isolated if they have no friends, i.e., $g_{s,ij} = 0$ for all $j \in \mathcal{N}_s$. However, they may still be nominated by others (who are therefore non-isolated).} Let $\mathcal{N}_s^{iso}$ denote the set of isolated agents in subnetwork $s$, and $\mathcal{N}_s^{niso}$ the set of non-isolated agents. The optimal strategy of an isolated agent is given by $y_{s,i}^{iso} = \alpha_{s,i}$. Since the vector of contextual variables $\bar{\boldsymbol{x}}_{s,i}$ is not defined for isolated agents, their outcome can be expressed as:
	\begin{equation}\label{eq:model:iso}
		y_{s,i}^{iso} = c_s^{iso} + \boldsymbol{x}_{s,i}^{iso\prime} \boldsymbol{\beta}_1 + \varepsilon_{s,i}^{iso},
	\end{equation}
	where variables with the superscript $iso$ refer to isolated agents, and $c_s^{iso} = c_s$ denotes the subnetwork fixed effect for isolated agents. For non-isolated agents, the outcome follows the best response function~\eqref{BRF} and is given by:
	\begin{equation}\label{eq:model:niso}
		y_{s,i}^{niso} = c_s^{niso} + \sum_{\tau}\lambda_{\tau}q_{s,i,\tau}^{niso}  + \boldsymbol{x}_{s,iniso}^{\prime} \tilde{\boldsymbol{\beta}}_1 + \bar{\boldsymbol{x}}_{s,i}^{niso\prime} \tilde{\boldsymbol{\beta}}_2+ \varepsilon_{s,i}^{niso},
	\end{equation}
	where variables with the superscript $niso$ refer to non-isolated agents and $q_{s,i,\tau}^{niso}$ denotes the $\tau$-quantile outcome among the peers of non-isolated agent $i$. The parameter  $\tilde{\boldsymbol{\beta}}_1 = (1 - \lambda_2)\boldsymbol{\beta}_1$ is a vector of parameters associated with individual characteristics, $\tilde{\boldsymbol{\beta}}_2 = (1 - \lambda_2)\boldsymbol{\beta}_2$ is a vector associated with contextual variables, and  $c_s^{niso} = (1 - \lambda_2)c_s$ is the subnetwork fixed effect for non-isolated agents.

	%Preferences exhibit only conformism if $\lambda_{\tau,1} = 0$ (or $\theta_{\tau,1} = 0$) for all $\tau$ and only strategic complementarity or substitution if $\lambda_{\tau,2} = 0$ (or $\theta_{\tau,2} = 0$) for all $\tau$. The aggregate parameter $\theta_{2} = \sum_{\tau} \theta_{\tau,2}$ captures total peer effects through conformist behavior. Since $\theta_{\tau,2}$ is nonnegative, testing for conformist preferences reduces to testing whether $\theta_{2}$ is strictly positive. This test determines whether agents conform to their peers at any quantile level.
	
	The model captures diverse structures of peer effects and, consequently, richer mechanisms through which an intervention aimed at increasing $c_s$ (e.g., teacher quality) influences outcomes. If $\lambda_{2}$ is strictly positive, an increase in $c_s$ leads, in the first step, to a smaller increase in $y_{s,i}^{niso}$, since $c_s^{niso} = (1 - \lambda_{2})c_s$. However, due to peer effects, an additional increase arises through peers' outcomes, potentially generating social multiplier effects. This additional increase depends on the peer effects at each quantile, the distribution of peer outcomes, and the network structure. This mechanism differs from that of the LIM model, where social multiplier effects depend only on a single peer effect parameter and the network structure.

	\subsection{Instruments}\label{sec:econometrics:inst}
	Since peer outcomes are endogenous, identifying the parameters in Equation~\eqref{eq:model:niso} requires instruments for the quantile peer outcomes $q_{s,i,\tau}$. I introduce two types of instruments---Type I and Type II. The exogeneity of the Type I instruments is guaranteed as long as the network and the covariates in $\boldsymbol{x}_{s,i}$ are exogenous. The Type II instruments are potentially stronger, but they may fail to be exogenous even when the network and covariates are. In Section~\ref{sec:econometrics:tests:validity}, I present tests to assess whether the Type II instruments are exogenous, relying on the assumption that the Type I instruments are valid. If the Type II instruments are found to be exogenous, reporting results that are based on them, or on the combination of both types, is preferable, given that the corresponding estimator is likely to be more precise.

	The Type I instruments for $q_{s,i,\tau}$ are defined as the quantiles of the characteristics in $\boldsymbol{x}$ and the contextual variables in $\bar{\boldsymbol{x}}$ among peers. Equation~\eqref{eq:model:niso} implies that variation in peer characteristics and contextual variables explains variation in peer outcomes and, thus, in the quantiles of peer outcomes. Accordingly, quantiles of peer characteristics and contextual variables can serve as instruments for $q_{s,i,\tau}$. The set of quantile levels that are used to construct these instruments need not match $\mathcal{T}$. I recommend using a finer subdivision than $\mathcal{T}$ to improve the relevance of the instruments.

	For example, assume that $q_{s,i,\tau}$ is the median among peers. Due to the nonlinearity of the quantile function, the median of $\boldsymbol{x}$ and $\bar{\boldsymbol{x}}$ among peers cannot strongly predict $q_{s,i,\tau}$. However, using all deciles of $\boldsymbol{x}$ and $\bar{\boldsymbol{x}}$ among peers more effectively captures variation in $q_{s,i,\tau}$, as these deciles provide a more comprehensive representation of the distribution of $\boldsymbol{x}$ and $\bar{\boldsymbol{x}}$. Note that the same deciles are used as instruments for $q_{s,i,\tau}$ at all quantile levels; the instruments are not specific to $\tau$.\footnote{To further improve instrument relevance, one could also consider the quantiles of peers’ quantiles of $\boldsymbol{x}$ and $\bar{\boldsymbol{x}}$ within their own peer groups as additional instruments. This is similar to using the characteristics of more distant peers (e.g., peers of peers of peers) in the LIM model.}

	Since $\boldsymbol{x}$ and $\bar{\boldsymbol{x}}$ likely include many variables, a large number of quantiles may not be necessary to adequately capture variation in $q_{s,i,\tau}$. In small samples, however, a large number of instruments can lead to biased IV estimates. Fortunately, this bias can be corrected when the number of instruments grows at a rate of $\sqrt{n}$ or even $n$, using methods such as the jackknife IV estimator or post-estimation bias correction \citep[see][]{wang2016bootstrap, mikusheva2022inference, houndetoungan2024inference}. %As Equation \eqref{eq:model:niso} is linear in parameters, these methods apply directly. 
	My R package \href{https://github.com/ahoundetoungan/QuantilePeer}{\texttt{QuantilePeer}} provides routines to implement some of these methods.

	The type II instruments are also based on peer characteristics and contextual variables, with the key difference being that they focus on the peers whose outcomes determine $q_{s,i,\tau}$. As a result, the instruments are specific to each $\tau$-quantile variable $q_{s,i,\tau}$. As pointed out earlier, the sample quantile among peers can be expressed as a weighted average of two peer outcomes. Specifically,
	\begin{equation}\label{eq:qsitau}
		q_{s,i,\tau} = (1 - \omega_{\tau,s,i})y_{s,j_1}  + \omega_{\tau,s,i}y_{s,j_2},
	\end{equation}
	where $j_1$ and $j_2$ are two peers of agent $i$, and $\omega_{\tau,s,i} \in [0,1]$ is a weight. This representation holds for all types of quantiles and for any level $\tau$ (see Appendix~\ref{Append:sampleQ}). The indices $j_1$ and $j_2$, as well as the weight $\omega_{\tau,s,i}$, are endogenous and depend on the ranking of peer outcomes. For instance, if $\tau = 0$, then $\omega_{\tau,s,i} = 0$, and $j_1$ is the peer with the smallest outcome.

	From Equations \eqref{eq:model:niso} and \eqref{eq:qsitau}, it follows that $\boldsymbol{z}_{\tau,s,i} = (1 - \omega_{\tau,s,i})\boldsymbol{x}_{s,j_1}  + \omega_{\tau,s,i}\boldsymbol{x}_{s,j_2}$ and $\bar{\boldsymbol{z}}_{\tau,s,i} = (1 - \omega_{\tau,s,i})\bar{\boldsymbol{x}}_{s,j_1}  + \omega_{\tau,s,i}\bar{\boldsymbol{x}}_{s,j_2}$ are strong predictors of $q_{s,i,\tau}$, provided that $\tilde{\boldsymbol{\beta}}_1$ or $\tilde{\boldsymbol{\beta}}_2$ is nonzero. However, although $\boldsymbol{x}$ and $\bar{\boldsymbol{x}}$ are assumed to be exogenous, $\boldsymbol{z}_{\tau,s,i}$ may be endogenous because $j_1$, $j_2$, and the weight $\omega_{\tau,s,i}$ are themselves endogenous. 
	
	The exogeneity of $\boldsymbol{z}_{\tau,s,i}$ depends on the network structure. Specifically, it hinges on whether variation in $y_{s,i}^{niso}$ can influence peer outcome ranking. If the ranking changes, then $j_1$, $j_2$, and $\omega_{\tau,s,i}$ are also likely to change. In Section \ref{sec:econometrics:tests}, I present tests to assess this exogeneity, while relying on the assumption that the Type I instruments are exogenous. %It is also possible to combine both types of instruments to enhance relevance. The exogeneity of the resulting instrument set can also be tested.

	\subsection{Identification}
	In this section, I study the identification of the model parameters. Let $d_{t}$ denote the number of quantile levels and let $\mathcal{T} = \{\tau_1, ~\dots, ~\tau_{d_{t}}\}$. I define $\boldsymbol{v}_{s,i}^{niso} = \big(q_{s,i,\tau_1}^{niso}, ~\dots, \break q_{s,i,\tau_{d_{t}}}^{niso}, \boldsymbol{x}_{s,i}^{niso\prime}, \bar{\boldsymbol{x}}_{s,i}^{niso\prime}\big)^{\prime}$ as the vector of explanatory variables in Equation~\eqref{eq:model:niso}, and let $\boldsymbol{z}_{s,i}^{niso}$ denote the vector of instruments. Naturally, $\boldsymbol{z}_{s,i}^{niso}$ includes the exogenous regressors $\boldsymbol{x}_{s,i}^{niso}$ and $\bar{\boldsymbol{x}}_{s,i}^{niso}$, as well as the Type I or Type II instruments (or a combination of both types) for $(q_{s,i,\tau_1}^{niso}, \dots, q_{s,i,\tau_{d_t}}^{niso})$.

	Let $\mathbf{G}_s = [g_{s,ij}]$ denote the network matrix of the $s$-th subnetwork, and let $d_1$ and $d_2$ be the dimensions of $\boldsymbol{x}_{s,i}$ and $\bar{\boldsymbol{x}}_{s,i}$, respectively. I impose the following conditions.
	\begin{assumption}[Identification] \label{ass:identification} \hfill
		\begin{enumerate}[label=\Alph*., ref=\Alph*, nosep]
			\item \label{ass:identification:noocollinear} Linear independence: The matrices $\mathbb{V}ar(\boldsymbol{x}_{s,i}^{iso})$ and $\mathbb{V}ar(\boldsymbol{z}_{s,i}^{niso})$ are positive definite.
			\item \label{ass:identification:exo} Exogeneity: $\mathbb{E}(\varepsilon_{s,i}^{iso} \mid \boldsymbol{x}_{s,j}^{iso}) = 0$ and $\mathbb{E}(\varepsilon_{s,i}^{niso} \mid \mathbf{G}_s, \boldsymbol{z}_{s,j}^{niso}) = 0$ for all $i$ and $j\in\mathcal{N}_s$.
			\item \label{ass:identification:relevance} Instrument relevance: The covariance matrix $\mathbb{C}ov (\boldsymbol{v}_{s,i}^{niso}, \boldsymbol{z}_{s,i}^{niso})$ is full rank; that is, its rank is $d_{\nu} = d_{t} + d_1 + d_2$ (the dimension of $\boldsymbol{v}_{s,i}^{niso}$).
		\end{enumerate}
	\end{assumption}
	
	The conditions in Assumption~\ref{ass:identification} are standard. Condition~\ref{ass:identification:noocollinear} requires that $\boldsymbol{x}_{s,i}^{iso}$ and $\boldsymbol{z}_{s,i}^{niso}$ exhibit sufficient variation to avoid perfect collinearity after the variables in Equations~\eqref{eq:model:iso} and \eqref{eq:model:niso} are demeaned to eliminate the fixed effects $c^{iso}$ and $c^{niso}$. Condition~\ref{ass:identification:exo} implies that agents’ characteristics, instruments, and the network are exogenous. The Type I instruments involve only quantiles of $\boldsymbol{x}$ and $\bar{\boldsymbol{x}}$ among peers. Therefore, their exogeneity is ensured if the agents' characteristics $\boldsymbol{x}$, contextual variables $\bar{\boldsymbol{x}}$, and the network are exogenous.\footnote{Many papers have recently studied network endogeneity in linear peer effects models \citep[see][]{hsieh2016social, johnsson2021estimation}. Some of these methods involve controlling for additional explanatory variables that are unobserved, but can be identified from a network formation model. These methods can be applied directly to Equation~\eqref{eq:model:niso} due to its linearity in parameters.} However, the Type II instruments may not satisfy Condition \ref{ass:identification:exo}.

	Condition~\ref{ass:identification:relevance} implies that $\boldsymbol{z}_{s,i}^{niso}$ is a strong instrument for $\boldsymbol{v}_{s,i}^{niso}$. It also requires the components of  $\boldsymbol{v}_{s,i}^{niso}$ to be linearly independent. Depending on the network structure, there is therefore a limit to the number of quantile levels that can be included in $\mathcal{T}$. Including too many quantile levels can induce collinearity among the quantile peer variables, reducing the rank of $\mathbb{C}ov(\boldsymbol{v}_{s,i}^{niso}, \boldsymbol{z}_{s,i}^{niso})$ below $d_{\nu}$. Instrument strength can be assessed in practice. However, when there are multiple endogenous variables, the standard F-test for a single endogenous regressor is not applicable. Instead, one can rely on the rank Wald test proposed by \citet{kleibergen2006generalized}---henceforth, the KP test---to assess whether the rank of $\mathbb{C}ov(\boldsymbol{v}_{s,i}^{niso}, \boldsymbol{z}_{s,i}^{niso})$ is less than $d_{\nu}$.

	Not all parameters of the structural model are identifiable. In particular, $\theta_{\tau,1}$ and $\theta_{\tau,2}$ cannot be separately identified at each quantile level. However, this does not raise an issue for conducting counterfactual analyses. Indeed, in Equation~\eqref{eq:model:niso}, the key parameters that determine the effects of an exogenous shock are: $\lambda_2$, which captures the total peer effect across all quantile levels due to conformity; the peer effect at each quantile level, measured by $\lambda_\tau$; and the coefficients of the exogenous variables. The identification result is summarized in the following proposition.

	\begin{proposition}\label{prop:ident}
		Under the conditions of Proposition \ref{prop:NE} and Assumption \ref{ass:identification}, if the share of subnetworks with at least two isolated students does not converge to zero asymptotically, then $\theta_{\tau} := \theta_{\tau,1} + \theta_{\tau,2}$ for all $\tau \in \mathcal{T}$, $\theta_1 := \sum_{\tau} \theta_{\tau,1}$, and $\theta_2 := \sum_{\tau} \theta_{\tau,2}$, as well as $\boldsymbol{\beta}_1$ and $\boldsymbol{\beta}_2$, are identified.
	\end{proposition}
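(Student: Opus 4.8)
The plan is to identify the parameters in two stages, exploiting the fact that the characteristic coefficient enters the isolated-agent equation \eqref{eq:model:iso} unscaled, as $\boldsymbol{\beta}_1$, but enters the non-isolated equation \eqref{eq:model:niso} scaled, as $\tilde{\boldsymbol{\beta}}_1 = (1-\lambda_2)\boldsymbol{\beta}_1$. First I would eliminate the subnetwork fixed effects $c_s^{iso}=c_s$ and $c_s^{niso}=(1-\lambda_2)c_s$ by the within (demeaning) transformation, applied separately to the two subsamples since the two scalings differ. On the isolated subsample, Condition~\ref{ass:identification:exo} gives the demeaned moment condition $\mathbb{E}[\boldsymbol{x}_{s,i}^{iso}\varepsilon_{s,i}^{iso}]=0$, and together with the positive definiteness of $\mathbb{V}ar(\boldsymbol{x}_{s,i}^{iso})$ from Condition~\ref{ass:identification:noocollinear} this pins down $\boldsymbol{\beta}_1$. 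On the non-isolated subsample, the instrument vector $\boldsymbol{z}_{s,i}^{niso}$ satisfies $\mathbb{E}[\boldsymbol{z}_{s,i}^{niso}\varepsilon_{s,i}^{niso}]=0$ by Condition~\ref{ass:identification:exo}, and the full-rank Condition~\ref{ass:identification:relevance} makes the IV moment conditions uniquely determine the reduced-form coefficient vector $(\lambda_{\tau_1},\dots,\lambda_{\tau_{d_t}},\tilde{\boldsymbol{\beta}}_1^{\prime},\tilde{\boldsymbol{\beta}}_2^{\prime})^{\prime}$ on $\boldsymbol{v}_{s,i}^{niso}$.

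The within transformation is where the hypothesis on isolated students enters. Demeaning annihilates any subnetwork that contains only a single isolated agent, so identifying variation in $\boldsymbol{x}_{s,i}^{iso}$ survives only in subnetworks with at least two isolated agents. Requiring that the share of such subnetworks stay bounded away from zero asymptotically guarantees that the demeaned second-moment matrix for the isolated subsample remains nonsingular in the limit, so that $\boldsymbol{\beta}_1$ is identified; two isolated agents per subnetwork is the minimal requirement for the first-stage design to retain information after removing $c_s$.

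Having recovered $\boldsymbol{\beta}_1$ and $\tilde{\boldsymbol{\beta}}_1=(1-\lambda_2)\boldsymbol{\beta}_1$ from the two stages, I would identify the scalar $1-\lambda_2$ as the ratio $\tilde\beta_{1,k}/\beta_{1,k}$ for any coordinate $k$ with $\beta_{1,k}\neq 0$; all such ratios coincide by the cross-equation restriction. Since $\theta_{\tau,2}\geq 0$ forces $\lambda_2=\theta_2/(1+\theta_2)\in[0,1)$, we have $1-\lambda_2>0$ and the ratio is well defined, so $\lambda_2$ is identified and hence $\boldsymbol{\beta}_2=\tilde{\boldsymbol{\beta}}_2/(1-\lambda_2)$. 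Inverting the reparametrization through $1+\theta_2=1/(1-\lambda_2)$ then gives $\theta_2=\lambda_2/(1-\lambda_2)$, next $\theta_\tau=\lambda_\tau/(1-\lambda_2)$ for every $\tau$, and finally $\theta_1=\sum_{\tau}\theta_\tau-\theta_2$, all expressed in terms of the already-identified $\lambda_\tau$ and $\lambda_2$. Note that only these sums and per-quantile totals are recovered: at each $\tau$ the data see $\theta_{\tau,1}$ and $\theta_{\tau,2}$ only through their sum $\lambda_\tau$ and through the aggregate $\lambda_2$, so the spillover/conformity split within a single quantile is not separately identified, consistent with the statement.

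The main obstacle is disentangling conformity from spillover, i.e. identifying $\lambda_2$: both mechanisms generate endogenous peer terms in \eqref{eq:model:niso} and cannot be separated from the non-isolated equation alone. The argument hinges entirely on the cross-equation restriction $\tilde{\boldsymbol{\beta}}_1=(1-\lambda_2)\boldsymbol{\beta}_1$, which carries information only when $\boldsymbol{\beta}_1\neq\boldsymbol{0}$; I would therefore make explicit that at least one individual characteristic enters the type \eqref{eq:type} nontrivially, as otherwise $\tilde{\boldsymbol{\beta}}_1=\boldsymbol{0}$ for every value of $\lambda_2$. The remaining work is bookkeeping around the two distinct fixed-effect scalings $c_s$ and $(1-\lambda_2)c_s$, which the separate within transformations handle cleanly, and verifying that the at-least-two-isolated-agents condition delivers the asymptotic nonsingularity required in the first stage.
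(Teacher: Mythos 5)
Your proposal is correct and follows essentially the same route as the paper: identify $\boldsymbol{\beta}_1$ from the demeaned isolated-agent equation, the reduced-form coefficients $(\lambda_{\tau_1},\dots,\lambda_{\tau_{d_t}},\tilde{\boldsymbol{\beta}}_1,\tilde{\boldsymbol{\beta}}_2)$ from the non-isolated equation by IV, recover $\lambda_2$ from the cross-equation restriction $\tilde{\boldsymbol{\beta}}_1=(1-\lambda_2)\boldsymbol{\beta}_1$, and invert the reparametrization via $\theta_\tau=\lambda_\tau/(1-\lambda_2)$, $\theta_2=\lambda_2/(1-\lambda_2)$, $\theta_1=\sum_{\tau}\theta_\tau-\theta_2$, and $\boldsymbol{\beta}_2=\tilde{\boldsymbol{\beta}}_2/(1-\lambda_2)$. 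The paper's own proof is just the remark that this is straightforward given the reduced forms, so your two extra points --- that the at-least-two-isolated-students condition is what preserves identifying variation in the isolated subsample after within-subnetwork demeaning, and that the ratio argument implicitly requires $\boldsymbol{\beta}_1\neq\boldsymbol{0}$ (a condition the paper leaves unstated) --- are correct elaborations of the same argument rather than a different approach.
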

	The identification of $(\lambda_{\tau_1}, ~\dots,~ \lambda_{\tau_{d_{t}}})$, $\tilde{\boldsymbol{\beta}}_1$, and $\tilde{\boldsymbol{\beta}}_2$ from the reduced-form equation~\eqref{eq:model:niso} is standard, as is the case in IV models with fixed effects. Similarly, $\boldsymbol{\beta}_1$ can be identified from the reduced-form equation~\eqref{eq:model:iso} for isolated agents. The proof of Proposition~\ref{prop:ident} is straightforward, given the identification of $(\lambda_{\tau_1}, ~\dots,~ \lambda_{\tau_{d_{t}}})$, $\boldsymbol{\beta}_1$, $\tilde{\boldsymbol{\beta}}_1$, and $\tilde{\boldsymbol{\beta}}_2$ from the reduced forms.% (see Appendix \ref{Append:prop:ident}). 

	As is the case in \cite{boucher2024toward}, the presence of isolated agents plays an important role. The parameters $\boldsymbol{\beta}_1$ and $\tilde{\boldsymbol{\beta}}_1$ are identified separately from the isolated and non-isolated equations, respectively, which makes it possible to identify $\lambda_2$ from the condition $\tilde{\boldsymbol{\beta}}_1 = (1 - \lambda_2)\boldsymbol{\beta}_1$. If there are no isolated agents, it is still possible to identify $\lambda_{\tau}$ for all $\tau$ in Equation~\eqref{eq:model:iso}. However, in that case, it is not possible to determine whether peer effects arise from spillover or from conformity.

	\subsection{Estimation}\label{sec:econometrics:estim}
	The reduced-form parameters in Equations~\eqref{eq:model:iso} and \eqref{eq:model:niso} can be estimated using an IV approach combined with ordinary least squares (OLS), while controlling for subnetwork fixed effects. To eliminate the fixed effects $c_s^{iso}$ and $c_s^{niso}$, all individual-level variables should be demeaned by subtracting the subnetwork-specific average. Since $c_s^{iso} = c_s$ and $c_s^{niso} = (1 - \lambda_2)c_s$ are not identical, the variables for each group---isolated and non-isolated agents---should be demeaned using the average within their respective group.
	
	A standard OLS method can be applied to the demeaned version of Equation~\eqref{eq:model:iso} to estimate $\boldsymbol{\beta}_1$. This estimate can then be substituted into the demeaned version of Equation~\eqref{eq:model:niso}, using the relationship $\tilde{\boldsymbol{\beta}}_1 = (1 - \lambda_2)\boldsymbol{\beta}_1$. The resulting equation can be estimated using an IV approach or the generalized method of moments (GMM) to recover $(\lambda_{\tau_1}, \dots, \lambda_{\tau_{d_{t}}})$, $\lambda_2$, and $\boldsymbol{\beta}_2$. This estimation strategy relies on the assumption that $\boldsymbol{\beta}_1$ is the same in both Equations~\eqref{eq:model:iso} and \eqref{eq:model:niso}. This assumption can be relaxed to allow only a subset of components of $\boldsymbol{\beta}_1$ to be common across the two equations.

	This estimation approach follows a standard multi-stage procedure. The resulting estimator is asymptotically normal, and its variance can be consistently estimated. I provide an estimator of the asymptotic variance that accounts for the sampling error of the OLS estimator (see Online Appendix~\ref{OA:other:limit}).

	It is also possible to estimate both equations jointly using their respective moment conditions. For example, \citet{boucher2024toward} maximize the sum of the two GMM objective functions. This approach can improve precision, particularly for the estimate of $\boldsymbol{\beta}_1$. Yet, it requires numerical optimization, whereas the method discussed above yields closed-form estimates. Moreover, the weighting matrix of each GMM substantially influences the peer effect estimates, and care should be taken when defining these weights. Since the first stage relies on OLS, the precision gains from joint estimation may be marginal when the number of isolated agents is large.

	\subsection{Specification Tests} \label{sec:econometrics:tests}
	Several diagnostic tests are necessary to assess the validity of the model specification. As in standard IV approaches, these include tests for weak instruments and overidentifying restrictions. In this section, I introduce additional diagnostic tools that are tailored to the present context, including an encompassing test for the choice of quantile levels and a test for the exogeneity of the Type II instruments.
	
	\subsubsection{Choice of quantile levels}\label{sec:econometrics:tests:encompassing}
	Misspecifying the quantile levels in $\mathcal{T}$ can lead to incorrect counterfactual analyses and misguided policy recommendations. Let $\mathcal{T}_a$ and $\mathcal{T}_b$ denote two sets of quantile levels, which may or may not be nested. I propose an encompassing test \citep{smith1992non} to assess whether the model specified with $\mathcal{T}_a$ can replicate the features of the model specified with $\mathcal{T}_b$. Specifically, the test evaluates whether applying the specification based on $\mathcal{T}_b$ to data simulated from a model using $\mathcal{T}_a$ yields estimates that differ from those obtained when applying the same specification to the observed data. If the estimates differ, this suggests that some features of the observed data relevant for the specification with $\mathcal{T}_b$ are not captured by the model with $\mathcal{T}_a$. In that case, $\mathcal{T}_a$ must be rejected in favor of $\mathcal{T}_b$.

	Let $\mathbf{y}^{niso}$ denote the outcome vector for non-isolated agents. From Equation~\eqref{eq:model:niso}, a representation of $\mathbf{y}^{niso}$ under the quantile level set $\mathcal{T}_a$ is
	\begin{equation}\label{eq:yniso:mat}
		\mathbf{y}^{niso} = \mathbf{V}^{niso}_a \boldsymbol{\psi}_a + \boldsymbol{\varepsilon}^{niso}_a,
	\end{equation}
	where $\boldsymbol{\psi}_a = (\lambda_{\tau_1,a},~\dots,~\lambda_{\tau_{d_t},a},~1 - \lambda_{2,a},~\tilde{\boldsymbol{\beta}}_{2,a})^{\prime}$ is the vector of coefficients and $\mathbf{V}_a^{niso}$ is the matrix of explanatory variables. Specifically, the row of $\mathbf{V}_a^{niso}$ corresponding to agent~$i \in \mathcal{N}_s^{niso}$ is given by 
	$(q_{s,i,\tau_1}^{niso},~\dots,~q_{s,i,\tau_{d_t}}^{niso},~\boldsymbol{x}_{s,i}^{niso\prime} \boldsymbol{\beta}_1,~\bar{\boldsymbol{x}}_{s,i}^{niso\prime})$, where $d_t = \lvert \mathcal{T}_a \rvert$ is the number of quantile levels in $\mathcal{T}_a$. The vector $\boldsymbol{\varepsilon}^{niso}_a$ contains the error terms $\varepsilon_{s,i}^{niso}$ that are associated with $\mathcal{T}_a$. All variables in Equation~\eqref{eq:yniso:mat} are expressed in deviations from the group mean to eliminate fixed effects. Let $\mathbf{Z}_a^{niso}$ denote the instrument matrix corresponding to $\mathbf{V}_a^{niso}$. The notation with subscript~$a$ extends naturally to the quantile level set $\mathcal{T}_b$, with the subscript~$a$ being replaced by~$b$.
	
	Let $\hat{\boldsymbol{\psi}}_a$ denote the generalized method of moments (GMM) estimator of $\boldsymbol{\psi}_a$:
	$$
	\hat{\boldsymbol{\psi}}_a = \hat{\mathbf H}_a^{-1} \hat{\mathbf{V}}^{niso\prime}_a \mathbf{Z}_a^{niso} \mathbf{W}_a^{niso} \mathbf{Z}_a^{niso\prime} \mathbf{y}^{niso},
	$$
	where $\hat{\mathbf{V}}^{niso}_a$ is the matrix that is obtained by replacing $\boldsymbol{\beta}_1$ in $\mathbf{V}^{niso}_a$ with its OLS estimator,  $\mathbf{W}_a^{niso}$ is the GMM weighting matrix and $\hat{\mathbf H}_a = \hat{\mathbf{V}}^{niso\prime}_a \mathbf{Z}_a^{niso} \mathbf{W}_a^{niso} \mathbf{Z}_a^{niso\prime} \hat{\mathbf{V}}^{niso}_a$. In the special case of IV estimation, $\mathbf{W}_a^{niso} = \left(\mathbf{Z}_a^{niso\prime} \mathbf{Z}_a^{niso}\right)^{-1}$. The representation in Equation~\eqref{eq:yniso:mat} may not correspond to the true data-generating process, as it depends on the arbitrary choice of the quantile level set $\mathcal{T}_a$ and, therefore, on the auxiliary parameter $\boldsymbol{\psi}_a$. To complete the representation, I assume that $\boldsymbol{\psi}_a$ corresponds to the probability limit of its IV or GMM estimator; that is, $\boldsymbol{\psi}_a = \plim \hat{\boldsymbol{\psi}}_a$, where $\plim$ denotes convergence in probability as $S$ tends to infinity. This definition ensures that $\hat{\boldsymbol{\psi}}_a$ is a consistent estimator of $\boldsymbol{\psi}_a$, even if $\mathcal{T}_a$ is misspecified.

	If $\mathcal{T}_a$ is correctly specified, this representation should replicate the features of any alternative representation based on a different quantile level set $\mathcal{T}_b$. A practical way to assess this is to examine whether the estimator $\hat{\boldsymbol{\psi}}_b$ differs substantially when computed using the observed data versus data generated under the model based on $\mathcal{T}_a$.

	Let $\boldsymbol{\delta}_{a,b} = \plim \hat{\boldsymbol{\psi}}_b - \plim_{\mathcal{T}_a} \hat{\boldsymbol{\psi}}_b$, where $\plim_{\mathcal{T}_a} \hat{\boldsymbol{\psi}}_b$ denotes the probability limit of $\hat{\boldsymbol{\psi}}_b$ under the assumption that the data are generated according to the model with quantile set $\mathcal{T}_a$. The null hypothesis of the encompassing test is $\boldsymbol{\delta}_{a,b} = \mathbf{0}$. A consistent estimator of $\boldsymbol{\delta}_{a,b}$ is provided in the following proposition.
	
	\begin{proposition}\label{prop:encompassing}
		Let $\hat{\boldsymbol{\varepsilon}}_{a}^{niso} = \mathbf{y}^{niso} - \hat{\mathbf{V}}_a^{niso} \hat{\boldsymbol{\psi}}_a$. Under regularity conditions (see Assumption~\ref{ass:encompassing} in Appendix~\ref{Append:encompassing}), a consistent estimator of $\boldsymbol\delta_{a,b}$ is given by
		$$
		\hat{\boldsymbol{\delta}}_{a,b} = \hat{\mathbf{H}}_b^{-1}\, \hat{\mathbf{V}}_b^{niso\prime} \mathbf{Z}_b^{niso} \mathbf{W}_b^{niso}\mathbf{Z}_b^{niso\prime} \hat{\boldsymbol{\varepsilon}}_{a}^{niso}.
		$$
		%where $\hat{\mathbf{H}}_b = \hat{\mathbf{V}}_b^{niso\prime} \mathbf{Z}_b^{niso} \mathbf{W}_b^{niso} \mathbf{Z}_b^{niso\prime} \hat{\mathbf{V}}_b^{niso}$.
	\end{proposition}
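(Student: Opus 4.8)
The plan is to exploit the linearity of the GMM problem in $\mathbf{y}^{niso}$ and reduce the claim to an exact algebraic identity followed by a law-of-large-numbers argument. The starting point is the decomposition of the outcome into its $\mathcal{T}_a$-fit and the associated residual. By the very definition of $\hat{\boldsymbol{\varepsilon}}_a^{niso}$, the identity $\mathbf{y}^{niso} = \hat{\mathbf{V}}_a^{niso}\hat{\boldsymbol{\psi}}_a + \hat{\boldsymbol{\varepsilon}}_a^{niso}$ holds regardless of whether $\mathcal{T}_a$ is correctly specified. Substituting it into the closed form of the $\mathcal{T}_b$-estimator $\hat{\boldsymbol{\psi}}_b = \hat{\mathbf H}_b^{-1}\hat{\mathbf{V}}_b^{niso\prime}\mathbf{Z}_b^{niso}\mathbf{W}_b^{niso}\mathbf{Z}_b^{niso\prime}\mathbf{y}^{niso}$ and collecting terms gives
\[
\hat{\boldsymbol{\psi}}_b = \hat{\mathbf H}_b^{-1}\hat{\mathbf{V}}_b^{niso\prime}\mathbf{Z}_b^{niso}\mathbf{W}_b^{niso}\mathbf{Z}_b^{niso\prime}\hat{\mathbf{V}}_a^{niso}\hat{\boldsymbol{\psi}}_a \;+\; \hat{\boldsymbol{\delta}}_{a,b},
\]
because the second summand is precisely the estimator in the proposition. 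Hence the proposed statistic is exactly the gap between $\hat{\boldsymbol{\psi}}_b$ and the first term, and it suffices to show that the first term is consistent for $\plim_{\mathcal{T}_a}\hat{\boldsymbol{\psi}}_b$; subtracting the elementary consistency $\hat{\boldsymbol{\psi}}_b \xrightarrow{p}\plim\hat{\boldsymbol{\psi}}_b$ then delivers the result.

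The first term is the sample analogue of the encompassing binding function: it is the $\mathcal{T}_b$-estimator one would obtain by replacing the outcome with the $\mathcal{T}_a$-systematic component $\hat{\mathbf{V}}_a^{niso}\hat{\boldsymbol{\psi}}_a$. To compute its probability limit I would invoke the consistency of $\hat{\boldsymbol{\psi}}_a$ for $\boldsymbol{\psi}_a$ (the GMM consistency implied by Assumption~\ref{ass:identification} applied to $\mathcal{T}_a$) and of the plug-in OLS $\hat{\boldsymbol{\beta}}_1$ entering $\hat{\mathbf{V}}_a^{niso}$ and $\hat{\mathbf{V}}_b^{niso}$, together with a law of large numbers for the sample cross-moments $\tfrac{1}{n}\mathbf{Z}_b^{niso\prime}\hat{\mathbf{V}}_b^{niso}$, $\tfrac{1}{n}\mathbf{Z}_b^{niso\prime}\hat{\mathbf{V}}_a^{niso}$ and $\hat{\mathbf{H}}_b$, and then the continuous-mapping theorem and Slutsky's lemma. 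The LLN applies because the network splits into independent subnetworks of bounded size (Section~\ref{sec:econometrics}) and each quantile entry is a bounded, measurable function of the outcomes, so the non-differentiability of the quantile map is immaterial at the level of probability limits; invertibility of the limit of $\hat{\mathbf{H}}_b$ follows from the relevance condition for $\mathcal{T}_b$.

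The remaining and most delicate step is to identify this probability limit with $\plim_{\mathcal{T}_a}\hat{\boldsymbol{\psi}}_b$, the pseudo-true value of the $\mathcal{T}_b$-estimator under data generated by the $\mathcal{T}_a$ model. Both objects share the same algebraic form — a GMM map applied to cross-moments of the instruments $\mathbf{Z}_b^{niso}$ with the $\mathcal{T}_a$- and $\mathcal{T}_b$-regressors, evaluated at $\boldsymbol{\psi}_a$ — but the first term uses cross-moments from the observed data, whereas the pseudo-true value uses the analogous cross-moments under the $\mathcal{T}_a$ data-generating process, in which the quantile columns of $\mathbf{V}_b^{niso}$ are recomputed from the regenerated equilibrium outcomes (whose existence and uniqueness follow from Proposition~\ref{prop:NE}). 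These cross-moments need not coincide a priori, and reconciling them is the main obstacle. I would discharge it with the regularity conditions of Assumption~\ref{ass:encompassing}, which ensure that the relevant instrument–regressor cross-moments are stable across the two schemes; the exogeneity in Condition~\ref{ass:identification:exo} then eliminates the instrument–error covariance, so that $\plim\tfrac{1}{n}\mathbf{Z}_b^{niso\prime}\boldsymbol{\varepsilon}_a^{niso}=\mathbf{0}$ under the $\mathcal{T}_a$ law and the two maps return the same value. Combining this with the decomposition above yields $\hat{\boldsymbol{\delta}}_{a,b}\xrightarrow{p}\plim\hat{\boldsymbol{\psi}}_b-\plim_{\mathcal{T}_a}\hat{\boldsymbol{\psi}}_b=\boldsymbol{\delta}_{a,b}$, as claimed. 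As a consistency check, under the null that $\mathcal{T}_a$ is correctly specified the residual $\boldsymbol{\varepsilon}_a^{niso}$ is orthogonal to $\mathbf{Z}_b^{niso}$, so $\hat{\boldsymbol{\delta}}_{a,b}\xrightarrow{p}\mathbf{0}$.
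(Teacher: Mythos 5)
Your algebraic skeleton is the same as the paper's: your exact identity $\hat{\boldsymbol{\psi}}_b = \hat{\mathbf{H}}_b^{-1}\hat{\mathbf{V}}_b^{niso\prime}\mathbf{Z}_b^{niso}\mathbf{W}_b^{niso}\mathbf{Z}_b^{niso\prime}\hat{\mathbf{V}}_a^{niso}\hat{\boldsymbol{\psi}}_a + \hat{\boldsymbol{\delta}}_{a,b}$ is a rearrangement of the paper's substitution of the $\mathcal{T}_a$ representation into the closed form \eqref{eq:psihat}, and your LLN/CMT/Slutsky paragraph parallels the paper's limit computation. The problem is the step you yourself flag as the most delicate one, which you resolve with a claim that is false. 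You assert that Assumption~\ref{ass:encompassing} ``ensures that the relevant instrument--regressor cross-moments are stable across the two schemes,'' i.e.\ that moment matrices computed from the observed data coincide asymptotically with those computed from outcomes regenerated under the $\mathcal{T}_a$ equilibrium. Assumption~\ref{ass:encompassing} only states that the parameters lie in compact sets and that $\boldsymbol{x}_{s,i}$ and $y_{s,i}$ are uniformly bounded in probability; no boundedness condition can equate probability limits under two different data-generating processes, and under the alternative a genuinely regenerated $\mathbf{V}_b^{niso}$ would in general have different cross-moments with $\mathbf{Z}_b^{niso}$. The obstacle you construct is dissolved, not by that assumption, but by how the paper operationalizes $\plim_{\mathcal{T}_a}$: the hypothesis ``data generated under $\mathcal{T}_a$'' is imposed on the \emph{observed} data, meaning $\mathbf{y}^{niso} = \mathbf{V}_a^{niso}\boldsymbol{\psi}_a + \boldsymbol{\varepsilon}_a^{niso}$ holds with $\boldsymbol{\varepsilon}_a^{niso} = \boldsymbol{\varepsilon}^{niso}$ exogenous with respect to $\mathbf{Z}_b^{niso}$ (Condition~\ref{ass:identification:exo}). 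All matrices $\mathbf{Z}_b^{niso}$, $\hat{\mathbf{V}}_b^{niso}$, $\mathbf{V}_a^{niso}$, $\hat{\mathbf{H}}_b$ are functionals of the same observed sample; the sole content of the hypothesis is the orthogonality $\plim \mathbf{Z}_b^{niso\prime}\boldsymbol{\varepsilon}_a^{niso}/S = \mathbf{0}$, which yields \eqref{eq:plimapsi}, and $\boldsymbol{\delta}_{a,b}$ then follows by subtraction with no cross-scheme reconciliation at all.

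Relatedly, you deploy Assumption~\ref{ass:encompassing} against the wrong obstacle and treat as routine the step it is actually needed for. The genuine difficulty in the paper is replacing $\boldsymbol{\varepsilon}_a^{niso}$ by $\hat{\boldsymbol{\varepsilon}}_a^{niso}$ inside the moment $\mathbf{Z}_b^{niso\prime}(\cdot)/S$: componentwise consistency of the residuals is not sufficient, because the number of summands grows with $S$. The paper uses compactness and uniform boundedness precisely so that, via the mean value theorem, the derivatives of $\mathbf{Z}_b^{niso\prime}\hat{\boldsymbol{\varepsilon}}_a^{niso}/S$ with respect to the plug-in estimators $\hat{\boldsymbol{\beta}}_1$ and $\hat{\boldsymbol{\psi}}_a$ are bounded in probability, making the substitution error $O_p\big(\lVert \hat{\boldsymbol{\beta}}_1 - \boldsymbol{\beta}_1\rVert + \lVert \hat{\boldsymbol{\psi}}_a - \boldsymbol{\psi}_a\rVert\big) = o_p(1)$. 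Your invocation of ``LLN plus Slutsky'' for cross-moments involving $\hat{\mathbf{V}}_a^{niso}\hat{\boldsymbol{\psi}}_a$ is gesturing at the same uniformity argument, but without stating where boundedness enters; once you relocate Assumption~\ref{ass:encompassing} to this step and drop the cross-scheme claim, your proof coincides with the paper's.
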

	
	The proof of Proposition~\ref{prop:encompassing} is provided in Appendix~\ref{Append:encompassing}. Testing whether $\boldsymbol{\delta}_{a,b}$ differs from zero amounts to assessing whether $\hat{\mathbf{V}}_b^{niso}$ explains variation in the residual vector under $\mathcal{T}_a$. If $\hat{\mathbf{V}}_b^{niso}$ has explanatory power, then $\mathcal{T}_a$ is rejected. However, one cannot perform a simple regression of $\hat{\boldsymbol{\varepsilon}}_a$ on $\hat{\mathbf{V}}_b^{niso}$ to assess this explanatory power, given that the variables in this regression depend on both the OLS estimator $\hat{\boldsymbol{\beta}}_1$ and the GMM estimator $\hat{\boldsymbol{\psi}}_a$. Therefore, the test must account for the sampling error that is introduced by these preliminary estimators. I derive in Online Appendix~\ref{OA:other:encompassing} that $\sqrt{S}\, (\hat{\boldsymbol{\delta}}_{a,b} - \boldsymbol{\delta}_{a,b})$ is asymptotically normally distributed with mean zero, and I present its asymptotic covariance matrix, which accounts for this sampling error.

	Given the asymptotic distribution of $\sqrt{S}\, (\hat{\boldsymbol{\delta}}_{a,b} - \boldsymbol{\delta}_{a,b})$, a robust F-test can be used to test the joint significance of $\boldsymbol{\delta}_{a,b}$. Yet, for greater robustness, one can also use the rank test proposed by \citet{kleibergen2006generalized}, i.e., the KP test. Since $\boldsymbol\delta_{a,b}$ is a vector, its rank is either zero or one. The null hypothesis is that the rank is zero, meaning that no linear combination of the columns of $\hat{\mathbf{V}}_b^{niso}$ explains the residuals. As is the case in weak instrument tests, the KP test can detect whether the variables in $\hat{\mathbf{V}}_b^{niso}$ have collective explanatory power, even if each component of $\boldsymbol\delta_{a,b}$ is individually insignificant.

	\subsubsection{Validity of the Type II Instruments}\label{sec:econometrics:tests:validity}
	As pointed out in Section~\ref{sec:econometrics:inst}, the exogeneity of the Type II instruments can be tested under the assumption that the Type I instruments are exogenous. Let $\mathbf{Z}_1^{niso}$ and $\mathbf{Z}_2^{niso}$ denote the two instrument matrices for non-isolated individuals, and let $\boldsymbol{\varepsilon}^{niso}$ be the vector of error terms for non-isolated individuals. I propose two tests: a Sargan-style test and a Wald-style test. Both rely on the assumption that $\mathbf{Z}_1^{niso}$ is exogenous, i.e., $\mathbb{E}(\mathbf{Z}_1^{niso\prime} \boldsymbol{\varepsilon}^{niso}) / S = 0$. The null hypothesis is that $\mathbb{E}(\mathbf{Z}_2^{niso\prime} \boldsymbol{\varepsilon}^{niso}) / S = 0$.

	Let $\mathbf{P}_1 = \mathbf{Z}_1^{niso} (\mathbf{Z}_1^{niso\prime} \mathbf{Z}_1^{niso})^{-1} \mathbf{Z}_1^{niso\prime}$ denote the projection matrix onto the columns of $\mathbf{Z}_1^{niso}$, and let $\mathbf{M}_1 = \mathbf{I}_{niso} - \mathbf{P}_1$, where $\mathbf{I}_{niso}$ is the identity matrix of dimension $n^{niso} \times n^{niso}$ and $n^{niso}$ is the number of non-isolated individuals. The Sargan-style test exploits the decomposition
	$$
	\frac{\mathbb{E}(\mathbf{Z}_2^{niso\prime} \boldsymbol{\varepsilon}^{niso})}{S} = \frac{\mathbb{E}(\mathbf{Z}_2^{niso\prime} \mathbf{P}_1 \boldsymbol{\varepsilon}^{niso})}{S} + \frac{\mathbb{E}(\mathbf{Z}_2^{niso\prime} \mathbf{M}_1 \boldsymbol{\varepsilon}^{niso})}{S},
	$$
	where the term $\frac{\mathbf{Z}_2^{niso\prime} \mathbf{P}_1 \boldsymbol{\varepsilon}^{niso}}{S}$ can be expressed as $\frac{\mathbf{Z}_2^{niso\prime} \mathbf{Z}_1^{niso}}{S} \left( \frac{\mathbf{Z}_1^{niso\prime} \mathbf{Z}_1^{niso}}{S} \right)^{-1} \frac{\mathbf{Z}_1^{niso\prime} \boldsymbol{\varepsilon}^{niso}}{S}$. If $\mathbf{Z}_1^{niso}$ is exogenous, then $\frac{\mathbf{Z}_1^{niso\prime} \boldsymbol{\varepsilon}^{niso}}{S}$ is asymptotically zero. Consequently, a consistent estimator of $\frac{\mathbb{E}(\mathbf{Z}_2^{niso\prime} \boldsymbol{\varepsilon}^{niso})}{S}$ that accounts for the exogeneity of $\mathbf{Z}_1^{niso}$ is
	$$
	\mathcal{E} = \frac{\mathbf{Z}_2^{niso\prime} \mathbf{M}_1 \hat{\boldsymbol{\varepsilon}}^{niso}}{S},
	$$
	where $\hat{\boldsymbol{\varepsilon}}^{niso}$ denotes the residuals obtained using $\mathbf{Z}_1^{niso}$ as the instrument matrix.
	
	The test statistic is given by
	$$
	\mathcal{U}_1 = S \mathcal{E}^{\prime} \left( \hat{\mathbb{V}}ar(\sqrt{S} \mathcal{E}) \right)^+ \mathcal{E},
	$$
	where $\hat{\mathbb{V}}ar(.)$ denotes a consistent estimator of the asymptotic variance, and $(\cdot)^+$ denotes the Moore–Penrose pseudoinverse, given that $\hat{\mathbb{V}}ar(\sqrt{S} \mathcal{E})$ may be singular when the two instrument sets share columns. Under the null, $\mathcal{U}_1$ asymptotically follows a chi-squared distribution with degrees of freedom equal to the rank of $\mathbf{M}_1 \mathbf{Z}_2^{niso}$. This test differs from the standard Sargan test for overidentification, as it explicitly removes from $\mathbf{Z}_2^{niso}$ the variation that is linearly explained by $\mathbf{Z}_1^{niso}$, thereby testing whether the additional information in $\mathbf{Z}_2^{niso}$ is exogenous.

	The Wald-style test compares the parameter estimates that are obtained using each instrument matrix, in the spirit of the specification test proposed by \cite{hausman1978specification}. Let $\hat{\boldsymbol{\psi}}_1$ and $\hat{\boldsymbol{\psi}}_2$ be the estimators of $\boldsymbol{\psi} = (\lambda_{\tau_1},~\dots,~\lambda_{\tau_{d_t}},~1 - \lambda_{2},~\tilde{\boldsymbol{\beta}}_{2})^{\prime}$ using $\mathbf{Z}_1^{niso}$ and either $\mathbf{Z}_2^{niso}$ or the combination of $\mathbf{Z}_1^{niso}$ and $\mathbf{Z}_2^{niso}$, respectively. Since $\mathbf{Z}_1^{niso}$ is assumed to be exogenous, under the null hypothesis that $\mathbf{Z}_2^{niso}$ is also exogenous, the difference $\hat{\boldsymbol{\psi}}_1 - \hat{\boldsymbol{\psi}}_2$ is asymptotically zero. The test statistic is therefore given by
	$$
	\mathcal{U}_2 = S(\hat{\boldsymbol{\psi}}_1 - \hat{\boldsymbol{\psi}}_2)^{\prime} \left\{ \hat{\mathbb{V}}ar\Big(\sqrt{S}(\hat{\boldsymbol{\psi}}_1 - \hat{\boldsymbol{\psi}}_2)\Big) \right\}^+ (\hat{\boldsymbol{\psi}}_1 - \hat{\boldsymbol{\psi}}_2).
	$$
	
	Under the null, $\mathcal{U}_2$ asymptotically follows a chi-squared distribution with degrees of freedom equal to the dimension of $\hat{\boldsymbol{\psi}}_1$. A key distinction from the classical Hausman test is that neither estimator is efficient under the null. As a result, the variance $\hat{\mathbb{V}}ar\big(\sqrt{S}(\hat{\boldsymbol{\psi}}_1 - \hat{\boldsymbol{\psi}}_2)\big)$ is not proportional to the difference between the variances of the two estimators, as is the case in the standard Hausman test. Instead, the variance must be estimated from the joint variance of $\hat{\boldsymbol{\beta}}_1$, $\hat{\boldsymbol{\psi}}_1$, and $\hat{\boldsymbol{\psi}}_2$ (see Online Appendix~\ref{OA:other:variance}).
	
	\section{Monte Carlo Simulations}\label{sec:simulations}
	This section presents a simulation study to assess the finite-sample performance of the model. The network consists of $S = 50$ subnetworks, each containing $n_s = 50$ individuals. The distribution of the number of friends is calibrated to match that observed in the Add Health data, which is used in the empirical application. Individuals can have up to 10 friends; 22\% have no friends, the average number of friends is 3.47, and approximately 64\% of individuals have four friends or fewer.

	I consider two exogenous variables, $\boldsymbol{x}_1$ and $\boldsymbol{x}_2$, along with their corresponding contextual variables, $\bar{\boldsymbol{x}}_1$ and $\bar{\boldsymbol{x}}_2$, which represent the average values among friends. The variable $\boldsymbol{x}_1$ is simulated from a standard normal distribution, while $\boldsymbol{x}_2$ is drawn from a Poisson distribution with parameter 2. The parameter values are inspired by the estimates from the empirical application. I set $\boldsymbol{\beta}_1 = (-0.5, 1)$ and $\boldsymbol{\beta}_2 = (-0.2, 0.6)$. The model also includes an intercept, which is fixed at 4. The error term is drawn from a normal distribution with a zero mean and a standard deviation of 0.7.

	I consider quantile-based specifications with four quantile levels. The total peer effect parameter is set to $\lambda = 0.55$, with $\lambda_1 = 0.35$ and $\lambda_2 = 0.2$ representing spillover and conformity effects, respectively. I explore several data-generating processes (DGPs) that differ in how the total peer effect is distributed across quantiles. Table~\ref{tab:simu:estimation} reports the values of $\lambda_{\tau_1}$, $\lambda_{\tau_2}$, $\lambda_{\tau_3}$, and $\lambda_{\tau_4}$ for each DGP. Peer effects are increasing in DGP A, decreasing in DGP B, concave in DGP C, and convex in DGPs D and E. I also examine the performance of the quantile model when the data are generated from a standard LIM model (DGP F), where the peer effects are entirely due to spillover (i.e., $\lambda_2 = 0$) and set to $\lambda = 0.55$.

	Table~\ref{tab:simu:estimation} summarizes the estimation results. Since the true number of quantile levels is unknown in practice, I estimate models with three, four, and five quantile levels and use the encompassing test to compare their performance. For brevity, the estimates from the models with three and five quantile levels are not reported. However, Table~\ref{tab:simu:tests} presents the results of the encompassing tests, along with the KP test for weak instruments and the test for the validity of $\mathbf{Z}_2$ (the Type II instrument matrix), under the assumption that $\mathbf{Z}_1$ (the Type I instrument matrix) is exogenous. All estimates account for subnetwork-level heterogeneity using fixed effects.

	Overall, the quantile model performs well when either $\mathbf{Z}_1$ alone or the combination of $\mathbf{Z}_1$ and $\mathbf{Z}_2$ is used as the instrument set. The estimates are slightly more precise in the latter case, owing to the improved strength of the instruments. As shown in Table~\ref{tab:simu:tests}, the inclusion of $\mathbf{Z}_2$ leads to a substantial increase in the KP test statistic for weak instruments. Moreover, the validity of $\mathbf{Z}_2$ is generally not rejected at the 5\% or 10\% significance levels. One exception, however, is DGP~F, where the exogeneity of $\mathbf{Z}_2$ is rejected more frequently.

	In DGP F, this potential endogeneity is also reflected in the quantile peer effect estimates that are reported in Table~\ref{tab:simu:estimation}. Since the data in this case are generated from a standard LIM model, the peer effect at each quantile is $0.55 / 4 = 0.138$. Yet, the estimates exhibit a slight downward bias at the extreme quantiles and an upward bias at the second and third quantiles. One way to mitigate this bias is to increase the number of quantile levels. For instance, the encompassing test for DGP~F rejects the specification with four quantile levels in favor of the one with five in 20.6\% of simulations at the 5\% significance level.% and in 31.0\% of simulations at the 10\% significance level.

	Furthermore, the encompassing test comparing the specification with three quantile levels to that with four is rejected in many simulations for DGPs~A, B, C, E, and F. In particular, for DGP~E, where peer effects are convex and asymmetric, the three-quantile specification is always rejected. 
	
	Note that when the three-quantile specification is not rejected, this suggests that the specification captures all relevant features of the model with four quantile levels. This often occurs when quantile peer effects are monotonic, as in DGPs~A and B. Another example is DGP~D, where the three-quantile specification is rejected in only 1.0\% of simulations. This result is expected, as only peers with extreme values matter in this DGP. Thus, even two quantiles capturing the minimum and maximum are sufficient. Overall, the encompassing test results are coherent and indicate that it is a useful tool for selecting the quantile levels.

	Table~\ref{tab:simu:estimation} also presents estimation results from the LIM model and the CES-based specification. The $\rho$ parameter in the CES specification is relatively high (13.036) for DGP~A and low ($-4.119$) for DGP B. This aligns with the interpretation that peers with higher outcomes matter more in DGP A, whereas those with lower outcomes matter more in DGP B. The CES specification can serve as an alternative model for such DGPs, although it summarizes the peer effect pattern in a single parameter and masks the heterogeneity of effects across specific quantiles. In contrast, the LIM model generally fails to capture the total peer effect accurately, underestimating it in DGP~A and overestimating it in DGP~B.

	The CES model also fails to identify the correct social norm when peer effects are not monotonic. For example, the estimates of $\rho$ neither reflect that peers with extreme outcome values are not influential in DGP C, nor that mid-level outcome peers are not influential in DGP D. The problem is even more pronounced in DGP~E, where peer effects are convex. The CES model estimates $\rho$ at 2.336, suggesting that peer influence increases with peer outcomes. Yet, strong peer effects are observed only at the second quantile. This discrepancy stems from the asymmetry in peer effects and the negative effect estimated at the first quantile. Such differences in the estimates can lead to misidentification of key players.

	\afterpage{
		\begin{landscape}
			\begin{table}[htbp]
				\centering
				\footnotesize
				\caption{Monte Carlo Simulations --- Estimations}
				\label{tab:simu:estimation}
				\resizebox{21.5cm}{!}{
					\begin{threeparttable}
						\begin{tabular}{ccccclccccclcclccc}
							\toprule
							\multicolumn{5}{c}{Quantiles,   Instruments: $\mathbf{Z}_1$}                                           &            & \multicolumn{5}{c}{Quantiles,   Instruments: $\mathbf{Z}_1$, $\mathbf{Z}_2$}                           &            & \multicolumn{2}{c}{LIM}           &           & \multicolumn{3}{c}{CES}           \\
							$\lambda_{\tau_1}$ & $\lambda_{\tau_2}$ & $\lambda_{\tau_3}$ & $\lambda_{\tau_4}$ & \multicolumn{2}{l}{$\lambda_2$} & $\lambda_{\tau_1}$ & $\lambda_{\tau_2}$ & $\lambda_{\tau_3}$ & $\lambda_{\tau_4}$ & \multicolumn{2}{l}{$\lambda_2$} & $\lambda$   & \multicolumn{2}{l}{$\lambda_2$} & $\rho$  & $\lambda$ & $\lambda_2$ \\
							\midrule
							\multicolumn{18}{c}{DGP A: $\boldsymbol\lambda = (0, 0.05, 0.2, 0.3)$, $\lambda_2 = 0.2$}                                                                    \\[0.5ex]
							0.000   & 0.050   & 0.199   & 0.301   & 0.201   &  & -0.000  & 0.050   & 0.199   & 0.301   & 0.201   &  & 0.436   & 0.202   &  & 13.036  & 0.582   & 0.200   \\
							(0.007) & (0.016) & (0.031) & (0.021) & (0.018) &  & (0.006) & (0.013) & (0.022) & (0.015) & (0.018) &  & (0.018) & (0.021) &  & (1.989) & (0.014) & (0.019) \\[2.5ex]
							\multicolumn{18}{c}{DGP B: $\boldsymbol\lambda = (0.3, 0.2, 0.05, 0)$, $\lambda_2 = 0.2$}                                                                    \\[0.5ex]
							0.300   & 0.200   & 0.048   & 0.001   & 0.201   &  & 0.300   & 0.200   & 0.049   & 0.001   & 0.201   &  & 0.602   & 0.198   &  & -4.119  & 0.508   & 0.201   \\
							(0.018) & (0.047) & (0.058) & (0.027) & (0.018) &  & (0.012) & (0.028) & (0.035) & (0.017) & (0.018) &  & (0.021) & (0.020) &  & (1.795) & (0.030) & (0.019) \\[2.5ex]
							\multicolumn{18}{c}{DGP C: $\boldsymbol\lambda = (0, 0.275, 0.275, 0)$, $\lambda_2 = 0.2$}                                                                   \\[0.5ex]
							0.000   & 0.275   & 0.274   & 0.001   & 0.201   &  & -0.000  & 0.275   & 0.274   & 0.001   & 0.201   &  & 0.552   & 0.201   &  & 1.396   & 0.566   & 0.201   \\
							(0.010) & (0.025) & (0.042) & (0.025) & (0.018) &  & (0.008) & (0.018) & (0.027) & (0.016) & (0.018) &  & (0.012) & (0.019) &  & (0.131) & (0.013) & (0.019) \\[2.5ex]
							\multicolumn{18}{c}{DGP D: $\boldsymbol\lambda = (0.275, 0, 0, 0.275)$, $\lambda_2 = 0.2$}                                                                   \\[0.5ex]
							0.275   & -0.000  & -0.001  & 0.276   & 0.201   &  & 0.275   & 0.000   & -0.001  & 0.276   & 0.201   &  & 0.543   & 0.200   &  & 0.288   & 0.520   & 0.201   \\
							(0.012) & (0.029) & (0.046) & (0.026) & (0.018) &  & (0.009) & (0.021) & (0.030) & (0.017) & (0.018) &  & (0.013) & (0.019) &  & (0.207) & (0.015) & (0.019) \\[2.5ex]
							\multicolumn{18}{c}{DGP E: $\boldsymbol\lambda = (-0.05, 0.35, 0.15, 0.1)$, $\lambda_2 = 0.2$}                                                               \\[0.5ex]
							-0.050  & 0.350   & 0.149   & 0.101   & 0.201   &  & -0.050  & 0.350   & 0.149   & 0.100   & 0.201   &  & 0.540   & 0.202   &  & 2.336   & 0.590   & 0.200   \\
							(0.009) & (0.022) & (0.039) & (0.024) & (0.018) &  & (0.007) & (0.017) & (0.025) & (0.016) & (0.018) &  & (0.012) & (0.019) &  & (0.176) & (0.013) & (0.019) \\[2.5ex]
							\multicolumn{18}{c}{DGP F (LIM model): $\lambda = 0.55$, $\lambda_2 = 0$}                                                                                    \\[0.5ex]
							0.114   & 0.168   & 0.143   & 0.125   & 0.001   &  & 0.114   & 0.164   & 0.152   & 0.120   & 0.001   &  & 0.550   & 0.001   &  & 1.001   & 0.550   & 0.001   \\
							(0.009) & (0.020) & (0.036) & (0.023) & (0.023) &  & (0.007) & (0.016) & (0.026) & (0.016) & (0.023) &  & (0.010) & (0.023) &  & (0.093) & (0.011) & (0.023) \\ \bottomrule
						\end{tabular}
						\begin{tablenotes} % Start tablenotes
							\item[-] The models are simulated and estimated 1{,}000 times. Values without parentheses represent average peer effect estimates, while those in parentheses correspond to standard errors. The instrument matrix $\mathbf{Z}_1$ includes the quantiles of $\boldsymbol{x}$ and $\bar{\boldsymbol{x}}$ among friends, computed at ten levels uniformly spaced between 0 and 1. For both $\mathbf{Z}_1$ and $\mathbf{Z}_2$, I include the quantiles of friends at distances up to three (i.e., friends of friends of friends).
							\item[-] DGPs A--E are generated from the proposed quantile-based model, where $\mathcal{T} = \{0, ~1/3, ~ 2/3,~ 1\}$, and $\boldsymbol\lambda = (\lambda_{\tau_1}, ~\lambda_{\tau_2}, ~\lambda_{\tau_3}, ~\lambda_{\tau_4})$ is the vector of peer effects at each quantile. The total conformity parameter is given by $\lambda_2 = \frac{\sum_{\tau} \theta_{\tau,2}}{1 + \sum_{\tau} \theta_{\tau,2}}$ (see Equation \eqref{eq:model:niso}). DGP F follows the standard LIM model with only spillover effects, where $\lambda = 0.55$ and $\lambda_2 = 0$.  
							\item[-] In both the standard LIM and CES-based models, $\lambda$ measures the total peer effect, while $\lambda_2$ represents the conformity parameter. The parameter $\rho$ is the substitution parameter of the CES social norm.  
							\item[-] All estimations account for unobserved subnetwork heterogeneity using fixed effects.
						\end{tablenotes}
				\end{threeparttable}}
			\end{table}
		\end{landscape}

		\begin{table}[htbp]
			\centering
			\caption{Monte Carlo Simulations --- Tests}
			\label{tab:simu:tests}
			\footnotesize
			\begin{threeparttable}
				\begin{tabular}{P{1cm}P{1cm}P{1cm}P{1cm}P{.5cm}ccP{1cm}P{1cm}}
					\toprule
					\multicolumn{4}{c}{Encompassing tests} && \multicolumn{2}{c}{KP test statistic} & \multicolumn{2}{c}{Validity of $\mathbf{Z}_2$}\\[0.5ex]
					\multicolumn{2}{c}{3 qtls vs. 4 qtls} & \multicolumn{2}{c}{4 qtls vs. 5 qtls}  && $\mathbf{Z}_1$ & $\mathbf{Z}_1, ~\mathbf{Z}_2$   \\[0.5ex]
					5\% & 10\% & 5\% & 10\% &&&& 5\% & 10\%\\
					
					\midrule
					\multicolumn{9}{c}{DGP A}                                                            \\[0.5ex]
					0.601 & 0.721 & 0.004 & 0.015 &  & 725.275   & 2{,}812.540   & 0.031      & 0.067       \\
					&       &       &       &  & (310.523) & (1161.461) &            &             \\[1.5ex]
					\multicolumn{9}{c}{DGP B}                                                            \\[0.5ex]
					0.305 & 0.432 & 0.000 & 0.004 &  & 701.655   & 2349.305   & 0.041      & 0.090       \\
					&       &       &       &  & (506.386) & (1393.227) &            &             \\[1.5ex]
					\multicolumn{9}{c}{DGP C}                                                            \\[0.5ex]
					0.999 & 0.999 & 0.003 & 0.009 &  & 783.901   & 2969.431   & 0.031      & 0.077       \\
					&       &       &       &  & (389.414) & (1394.342) &            &             \\[1.5ex]
					\multicolumn{9}{c}{DGP D}                                                            \\[0.5ex]
					0.008 & 0.018 & 0.003 & 0.011 &  & 756.292   & 2655.145   & 0.035      & 0.082       \\
					&       &       &       &  & (428.106) & (1342.365) &            &             \\[1.5ex]
					\multicolumn{9}{c}{DGP E}                                                            \\[0.5ex]
					1.000 & 1.000 & 0.002 & 0.011 &  & 742.023   & 2698.948   & 0.035      & 0.073       \\
					&       &       &       &  & (354.167) & (1219.311) &            &             \\[1.5ex]
					\multicolumn{9}{c}{DGP F (LIM model)}                                                \\[0.5ex]
					0.987 & 0.994 & 0.206 & 0.310 &  & 853.174   & 2456.247   & 0.146      & 0.246       \\
					&       &       &       &  & (431.564) & (1101.169) &            &             \\\bottomrule
				\end{tabular}
				\begin{tablenotes}[para,flushleft]
					For the encompassing tests, the columns labeled ``$a$ qtls vs. $b$ qtls``, for integers $a$ and $b$, report the share of rejections of the null hypothesis that the model with $a$ quantile levels does not perform worse than the model with $b$ quantile levels, at the significance levels indicated in the third row. The ``KP test statistic`` columns report the average value of the KP test statistic, with the corresponding standard deviations shown in parentheses. The instrument matrix used is specified in the second row. The columns under ``Validity of $\mathbf{Z}_2$`` report the share of rejections of the null hypothesis that $\mathbf{Z}_2$ is exogenous, based on a Sargan-style test that assumes $\mathbf{Z}_1$ is valid. Results are shown for the significance levels listed in the third row.
				\end{tablenotes}
			\end{threeparttable}
	\end{table}}
	
	\vspace{0.5cm}
	\vspace{-0.5cm}
	\section{Empirical Application}\label{sec:application}
	In this section, I present an empirical application using the Wave I dataset from the National Longitudinal Study of Adolescent to Adult Health (Add Health). I examine several outcomes and show that the quantile model captures diverse patterns of peer effects that challenge existing specifications. To illustrate the effectiveness of the model, I also conduct a counterfactual analysis, demonstrating that ignoring these patterns can lead to the misidentification of key players and reduce the impact of targeted policies.

	\subsection{Add Health Data}
	Wave I of the Add Health survey provides nationally representative and detailed information on \nth{7}--\nth{12} graders from 144 schools during the 1994--1995 school year in the United States (US). Approximately 90{,}000 students completed a questionnaire covering demographics, family background, academic performance, health-related behaviors, and friendship links. Each respondent could nominate up to five male and five female best friends within the same school.

	%The network is restricted to the school level—students from different schools are not connected as friends.

	The dataset has some limitations that merit discussion. First, some referred friend identifiers are missing and are removed from the network, following standard practice in the literature using this dataset. As a result, students who appear isolated may not actually be isolated. I exclude such "false" isolates from the sample when none of their nominated friends are observed in the network. In addition, the observed degree is censored, as students can nominate up to ten friends. However, only 1.12\% of students reach this maximum. To maintain focus on the main objective of the paper, I do not address this censoring issue. Despite these limitations, it is worth noting that the Add Health dataset remains the most comprehensive network dataset that is currently available for studying peer effects.

	I study 11 outcomes, including grade point average (GPA), academic effort, participation in extracurricular activities, future expectations, trouble at school, smoking, drinking, risky behaviors, self-esteem, physical exercise, and fighting. All of these outcomes, except for future expectations, have been studied by \cite{boucher2024toward} using the CES model. The future expectations outcome is constructed as the sum of binary indicators for whether students believe they will live to age 35, avoid HIV/AIDS, graduate from college, and have a middle-class family income by age 30. I control for several exogenous variables, including student age, grade, sex, race, Hispanic ethnicity (Spanish-speaking), and mother's education and employment. I also control for contextual variables, defined as the average of the exogenous variables among a student’s friends. 
	
	After removing observations with missing values, the final sample comprises approximately 75{,}000 students from 141 schools. The average number of friends per student is 3.47; 22\% of students have no friends, and approximately 64\% have four friends or fewer.
	
	\subsection{Empirical Results}
	The estimation results for the quantile, LIM, and CES models are summarized in Table~\ref{tab:appresult}, and the specification tests are presented in Table~\ref{tab:appresult:test}.\footnote{Note that the estimates for the CES model differ from those reported by \cite{boucher2024toward} because their model does not control for contextual effects.} Given that isolated students play a key role in the model identification, I also estimate the reduced-form specification of the quantile model on the subsample of non-isolated students (see Table~\ref{tab:appresult:test}). The quantile peer effect estimates are similar to those from the structural model in Table~\ref{tab:appresult}, suggesting that using isolated students for the identification of the structural model does not distort the decomposition of peer effects across quantiles.
	
	I first examine the specification tests reported in Table~\ref{tab:appresult:test}. The $p$-value from the encompassing test indicates that the four-quantile specification is never rejected in favor of the five-quantile specification at the 5\% significance level. However, the model with three quantile levels is rejected for GPA, extracurricular activities, smoking, and risky behavior. To ensure a consistent specification across all outcomes, I adopt the four-quantile model for the entire analysis. The instrument matrix $\mathbf{Z}_1$ includes the quantiles of $\boldsymbol{x}$ and $\bar{\boldsymbol{x}}$ among friends, computed at ten levels uniformly spaced between 0 and 1.\footnote{Only $\mathbf{Z}_1$ is used in the empirical application, as the validity of $\mathbf{Z}_2$ is rejected for most outcomes.} The KP test for weak instruments yields large test statistics, suggesting that the model does not suffer from weak instrument problems. Finally, the overidentification test indicates that the instruments are generally valid, except for the academic effort outcome, for which the Sargan $p$-value is 0.032.
	
	The estimation results from the quantile peer effects model indicate that peer effects primarily stem from conformity for most outcomes. Exceptions include GPA, trouble at school, physical exercise, and fighting, for which spillover effects dominate conformity effects. In the cases of trouble at school and physical exercise, peer effects are driven solely by spillover, as the conformity effects are not statistically significant. These results are consistent with those obtained under the standard LIM and CES specifications.
	
	The decomposition of peer effects across peer outcome quantiles reveals several patterns. A prevalent pattern is one in which mid-level outcome peers are the most influential, while students appear nearly sensitive to peers with the lowest or highest outcomes. This pattern is observed for GPA, extracurricular activities, trouble at school, and physical exercise. For these outcomes, the LIM and CES specifications tend to overestimate peer effects, given that they cannot isolate the influence of mid-level outcome peers without also attributing weight to those at the extremes.
	
	For instance, in the case of extracurricular activities, the CES specification identifies a negative substitution parameter, with a standard error indicating a statistically significant difference from one. This result suggests that peers with lower outcomes are the most influential. Yet, the quantile model offers more nuanced insight: the most influential peers are those around the second quantile level, while peers with the lowest outcomes have little to no effect.

	Another related pattern is when, in addition to the influence of peers with intermediate outcomes, peers at one of the extreme quantiles also exert significant effects. This pattern is observed for smoking, drinking, and risky behaviors. Once again, the LIM and CES specifications cannot capture such nuanced structures. For these outcomes, the estimated substitution parameter indicates that the CES specification is not significantly different from the LIM specification.

	Furthermore, the results reveal cases in which peers with extreme outcomes are as influential as, or even more influential than, those with intermediate outcomes. This pattern is observed for future expectations, self-esteem, and fighting. For self-esteem, the CES specification yields a highly negative substitution parameter ($\rho = -14.041$). If this estimate were not accompanied by a substantial standard error, one might conclude that only peers with the lowest outcomes were influential. However, the quantile model shows that peers at the third quantile are also influential, while those at the second quantile are not. This result helps explain the large variance associated with the estimate for $\rho$. A similar pattern emerges with academic achievement, where $\rho$ is estimated at $-4.764$. The large variance associated with this estimate reflects the fact that, although peers with the lowest outcomes are the most influential, those with high outcomes also matter, as indicated by the quantile model.

	The differences in peer effect patterns across specifications highlight the importance of the quantile model. The fact that peers at specific quantiles may exert distinct effects has significant policy implications. The identification of key players in the network is sensitive to this feature and may vary across the three models.

	\begin{table}[htbp]
		\centering
		\caption{Empirical Results}
		\label{tab:appresult}
		\resizebox{\textwidth}{!}{
			\begin{threeparttable}
				\begin{tabular}{ccccccd{1}ccd{1}ccc}
					\toprule
					\multicolumn{6}{c}{Quantile} && \multicolumn{2}{c}{LIM} && \multicolumn{3}{c}{CES}\\
					$\lambda_{\tau_1}$ & $\lambda_{\tau_2}$ & $\lambda_{\tau_3}$ & $\lambda_{\tau_4}$ & $\lambda_1$ & $\lambda_2$ & & $\lambda_1$ & $\lambda_2$ & & $\rho$ & $\lambda_1$ & $\lambda_2$ \\
					\midrule
					\multicolumn{13}{c}{Academic   achievements (GPA)}                                                                 \\[0.5ex]
					0.089   & 0.119   & 0.644   & -0.102  & 0.471   & 0.279   &  & 0.542   & 0.283   &  & 0.509    & 0.530   & 0.283   \\
					(0.054) & (0.107) & (0.115) & (0.062) & (0.042) & (0.033) &  & (0.045) & (0.033) &  & (0.681)  & (0.046) & (0.033) \\[2.5ex]
					\multicolumn{13}{c}{Academic effort}                                                                               \\[0.5ex]
					0.142   & 0.147   & 0.153   & 0.092   & 0.217   & 0.316   &  & 0.269   & 0.319   &  & -4.764   & 0.141   & 0.314   \\
					(0.036) & (0.074) & (0.059) & (0.046) & (0.068) & (0.042) &  & (0.116) & (0.043) &  & (5.332)  & (0.120) & (0.043) \\[2.5ex]
					\multicolumn{13}{c}{Extracurricular   activities}                                                                  \\[0.5ex]
					-0.079  & 0.530   & 0.265   & -0.011  & 0.269   & 0.436   &  & 0.368   & 0.444   &  & -0.175   & 0.285   & 0.426   \\
					(0.086) & (0.132) & (0.084) & (0.021) & (0.060) & (0.051) &  & (0.065) & (0.051) &  & (0.426)  & (0.058) & (0.051) \\[2.5ex]
					\multicolumn{13}{c}{future expectations}                                                                             \\[0.5ex]
					0.152   & 0.132   & 0.179   & 0.070   & 0.118   & 0.414   &  & 0.257   & 0.430   &  & 0.978    & 0.232   & 0.427   \\
					(0.037) & (0.101) & (0.147) & (0.084) & (0.048) & (0.032) &  & (0.060) & (0.033) &  & (0.772)  & (0.064) & (0.032) \\[2.5ex]
					\multicolumn{13}{c}{Trouble at school}                                                                             \\[0.5ex]
					0.002   & 0.303   & 0.236   & 0.029   & 0.475   & 0.095   &  & 0.648   & 0.119   &  & 0.348    & 0.697   & 0.123   \\
					(0.074) & (0.118) & (0.078) & (0.044) & (0.090) & (0.074) &  & (0.110) & (0.074) &  & (0.412)  & (0.105) & (0.073) \\[2.5ex]
					\multicolumn{13}{c}{Smoking}                                                                                       \\[0.5ex]
					-0.102  & 0.368   & 0.374   & 0.116   & 0.145   & 0.611   &  & 0.243   & 0.600   &  & 1.320    & 0.158   & 0.594   \\
					(0.083) & (0.110) & (0.063) & (0.023) & (0.050) & (0.038) &  & (0.049) & (0.040) &  & (0.675)  & (0.129) & (0.040) \\[2.5ex]
					\multicolumn{13}{c}{Drinking}                                                                                      \\[0.5ex]
					0.113   & 0.075   & 0.235   & 0.081   & -0.016  & 0.520   &  & 0.131   & 0.521   &  & 0.245    & 0.190   & 0.519   \\
					(0.123) & (0.171) & (0.085) & (0.016) & (0.063) & (0.033) &  & (0.080) & (0.034) &  & (0.326)  & (0.101) & (0.034) \\[2.5ex]
					\multicolumn{13}{c}{Risky behaviors}                                                                               \\[0.5ex]
					-0.061  & 0.332   & 0.264   & 0.112   & 0.180   & 0.467   &  & 0.205   & 0.462   &  & 0.631    & 0.242   & 0.461   \\
					(0.101) & (0.165) & (0.097) & (0.024) & (0.049) & (0.031) &  & (0.062) & (0.031) &  & (0.370)  & (0.069) & (0.031) \\[2.5ex]
					\multicolumn{13}{c}{Self-esteem}                                                                                   \\[0.5ex]
					0.120   & 0.132   & 0.206   & -0.025  & 0.149   & 0.285   &  & 0.129   & 0.265   &  & -14.041  & 0.067   & 0.287   \\
					(0.055) & (0.111) & (0.082) & (0.026) & (0.087) & (0.071) &  & (0.165) & (0.075) &  & (20.394) & (0.105) & (0.071) \\[2.5ex]
					\multicolumn{13}{c}{Physical exercise}                                                                             \\[0.5ex]
					0.085   & 0.139   & 0.192   & -0.022  & 0.334   & 0.060   &  & 0.425   & 0.057   &  & 0.653    & 0.460   & 0.062   \\
					(0.047) & (0.077) & (0.082) & (0.047) & (0.062) & (0.034) &  & (0.115) & (0.035) &  & (0.471)  & (0.080) & (0.035) \\[2.5ex]
					\multicolumn{13}{c}{Fighting}                                                                                      \\[0.5ex]
					0.207   & -0.005  & 0.171   & 0.168   & 0.386   & 0.154   &  & 0.367   & 0.153   &  & 2.434    & 0.321   & 0.151   \\
					(0.082) & (0.126) & (0.077) & (0.031) & (0.062) & (0.043) &  & (0.080) & (0.043) &  & (1.131)  & (0.088) & (0.043) \\\bottomrule
				\end{tabular}
				\begin{tablenotes}[para,flushleft]
					Estimates are reported without parentheses, with standard errors shown in parentheses. The first row indicates the model used: quantile, LIM, or CES. The full table, including coefficients for the control variables, is available upon request.
				\end{tablenotes}
		\end{threeparttable}}
	\end{table}

	\begin{table}
		\centering
		\footnotesize
		\caption{Empirical Results --- Isolated Students and Specification Tests}
		\label{tab:appresult:test}
		\begin{threeparttable}
			\begin{tabular}{P{1.2cm}P{1.2cm}P{1.2cm}P{1.2cm}P{1.5cm}P{1.7cm}P{1.2cm}P{1.2cm}}
				\toprule
				\multicolumn{4}{c}{Isolated students only} & \multirow{2}{*}{\parbox{1.5cm}{\centering KP test \\ statistic}} & \multirow{2}{*}{\parbox{1.7cm}{\centering Sargan test \\ $p$-value}} & \multicolumn{2}{P{2.4cm}}{Encompassing test $p$-value} \\
				$\lambda_{\tau_1}$ & $\lambda_{\tau_2}$ & $\lambda_{\tau_3}$ & $\lambda_{\tau_4}$ & & & 3 vs. 4 & 4 vs. 5\\
				\midrule
				\multicolumn{8}{c}{Academic   achievements (GPA)}                              \\[0.25ex]
				0.073   & 0.165   & 0.636   & -0.107  & 3,614 & 0.204  & 0.002     & 0.370     \\
				(0.054) & (0.107) & (0.114) & (0.061) &       &        &           &           \\[1ex]
				\multicolumn{8}{c}{Academic effort}                                            \\[0.25ex]
				0.153   & 0.141   & 0.156   & 0.102   & 997   & 0.032  & 0.578     & 0.368     \\
				(0.036) & (0.073) & (0.060) & (0.046) &       &        &           &           \\[1ex]
				\multicolumn{8}{c}{Extracurricular   activities}                               \\[0.25ex]
				-0.064  & 0.519   & 0.269   & -0.005  & 2,945 & 0.055  & 0.005     & 0.768     \\
				(0.086) & (0.132) & (0.084) & (0.021) &       &        &           &           \\[1ex]
				\multicolumn{8}{c}{future expectations}                                          \\[0.25ex]
				0.143   & 0.134   & 0.195   & 0.055   & 2,075 & 0.223  & 0.958     & 0.983     \\
				(0.036) & (0.100) & (0.146) & (0.084) &       &        &           &           \\[1ex]
				\multicolumn{8}{c}{Trouble at school}                                          \\[0.25ex]
				0.019   & 0.275   & 0.249   & 0.035   & 1,283 & 0.341  & 0.984     & 0.079     \\
				(0.074) & (0.118) & (0.078) & (0.044) &       &        &           &           \\[1ex]
				\multicolumn{8}{c}{Smoking}                                                    \\[0.25ex]
				-0.095  & 0.357   & 0.372   & 0.125   & 1,461 & 1.000  & 0.000     & 0.144     \\
				(0.083) & (0.111) & (0.063) & (0.023) &       &        &           &           \\[1ex]
				\multicolumn{8}{c}{Drinking}                                                   \\[0.25ex]
				0.119   & 0.056   & 0.226   & 0.080   & 1,300 & 0.573  & 0.074     & 0.927     \\
				(0.122) & (0.170) & (0.084) & (0.016) &       &        &           &           \\[1ex]
				\multicolumn{8}{c}{Risky behaviors}                                            \\[0.25ex]
				-0.065  & 0.350   & 0.278   & 0.117   & 1,979 & 0.132  & 0.029     & 0.080     \\
				(0.101) & (0.165) & (0.096) & (0.024) &       &        &           &           \\[1ex]
				\multicolumn{8}{c}{Self-esteem}                                                \\[0.25ex]
				0.112   & 0.155   & 0.230   & -0.022  & 1,566 & 0.090  & 0.955     & 0.992     \\
				(0.055) & (0.110) & (0.081) & (0.026) &       &        &           &           \\[1ex]
				\multicolumn{8}{c}{Physical exercise}                                          \\[0.25ex]
				0.096   & 0.156   & 0.193   & -0.002  & 1,222 & 0.260  & 0.941     & 0.308     \\
				(0.047) & (0.077) & (0.082) & (0.047) &       &        &           &           \\[1ex]
				\multicolumn{8}{c}{Fighting}                                                   \\[0.25ex]
				0.228   & 0.001   & 0.202   & 0.182   & 2,184 & 0.591  & 0.450     & 0.750     \\
				(0.082) & (0.127) & (0.077) & (0.031) &       &        &           &          \\\bottomrule
			\end{tabular}
			\begin{tablenotes}[para,flushleft]
				Estimates are reported without parentheses; standard errors are shown in parentheses. The first four columns present results from the model estimated using only isolated students. The KP test refers to the rank Wald test for weak instruments, while the Sargan test is the Sargan overidentification test. For the encompassing test, the columns labeled ``$a$ vs. $b$'', for integers $a$ and $b$, test the null hypothesis that the specification with $a$ quantile levels does not perform worse than the specification with $b$ quantile levels.
			\end{tablenotes}
		\end{threeparttable}
	\end{table}
	
	\subsection{Measuring Student Influence}
	This section studies the influence of students within their school. Influence is measured by the change in the school's average outcome when the student's outgoing and incoming links are removed. This corresponds to a scenario in which the student is \textit{fully isolated}; that is, they have no friends and are not nominated by others. Students who are already fully isolated in the observed network have no influence. For students who are not fully isolated, removing their links can affect the outcome distribution at the game equilibrium by altering the peer sets of those who nominated them. This measure of influence is also considered by \citet{ballester2006s} and \citet{lee2021key}, who define the key player as the student with the greatest influence.
	
	I numerically compute the influence for each student by setting the model parameters to their estimated values. Within each school, I rank students by assigning the highest rank to the student with the largest influence. I then compare the rankings obtained from the quantile model to those from the LIM and CES models. Since the effect of removing a single student's links can be negligible in large networks, I focus on schools with fewer than 50 students.\footnote{In larger schools, a similar simulation exercise can be conducted by removing the links of a group of students rather than just one.}

	Let $\mathbf{G}_s$ be the $n_s \times n_s$ adjacency matrix of school $s$, and let $\mathbf{G}_s^{(i)}$ denote the matrix obtained by setting the $i$-th row and column of $\mathbf{G}_s$ to zero. Let $y_{j,s}^{(i)}$ be the outcome of student $j$ when the school network is $\mathbf{G}_s^{(i)}$. The influence of student $i$ is measured by
	$$
	P_{s,i} = \dfrac{1}{n_s} \sum_{j = 1}^{n_s} \left(y_{s,j} - y_{j,s}^{(i)}\right).
	$$
	
	The student ranks (normalized between 0 and 100) are presented in Figure~\ref{fig:counterfact} for selected outcomes. The rank gaps are substantial, especially for extracurricular activities, smoking, and drinking, where certain students who receive the highest influence scores under the CES and LIM models are assigned relatively low scores under the quantile model, and vice versa. These discrepancies reflect the non-monotonic pattern of peer effects associated with these outcomes. The quantile model assigns low ranks to students who are peers with extreme outcome values, as such peers exert little or no influence. In contrast, the LIM and CES specifications cannot isolate the influence of mid-level outcome peers without also attributing high weight to those at the extremes. As a result, students who are peers with extreme outcome values are not assigned low ranks under these models. %A similar issue arises for the other outcome, though it is less pronounced.

	For the other outcomes, the rank gaps are less pronounced because the pattern of peer effects is closer to a monotonic or uniform structure, which can be captured by the LIM and CES specifications.

	Overall, this counterfactual analysis reveals that key player status varies across the three models. In the standard LIM model, a high outcome for a student who is not fully isolated increases their influence \citep[see][]{ballester2006s}. However, in the quantile or CES models, a high outcome does not necessarily imply greater influence if the student is not an influential peer. The results also show that when peer effects follow a non-monotonic pattern, the CES model does not fully address the limitations of the LIM model and may misidentify the most influential students. The quantile model provides an alternative specification in such settings.

	\begin{figure}[!h]
		\centering
		\includegraphics[scale = 1]{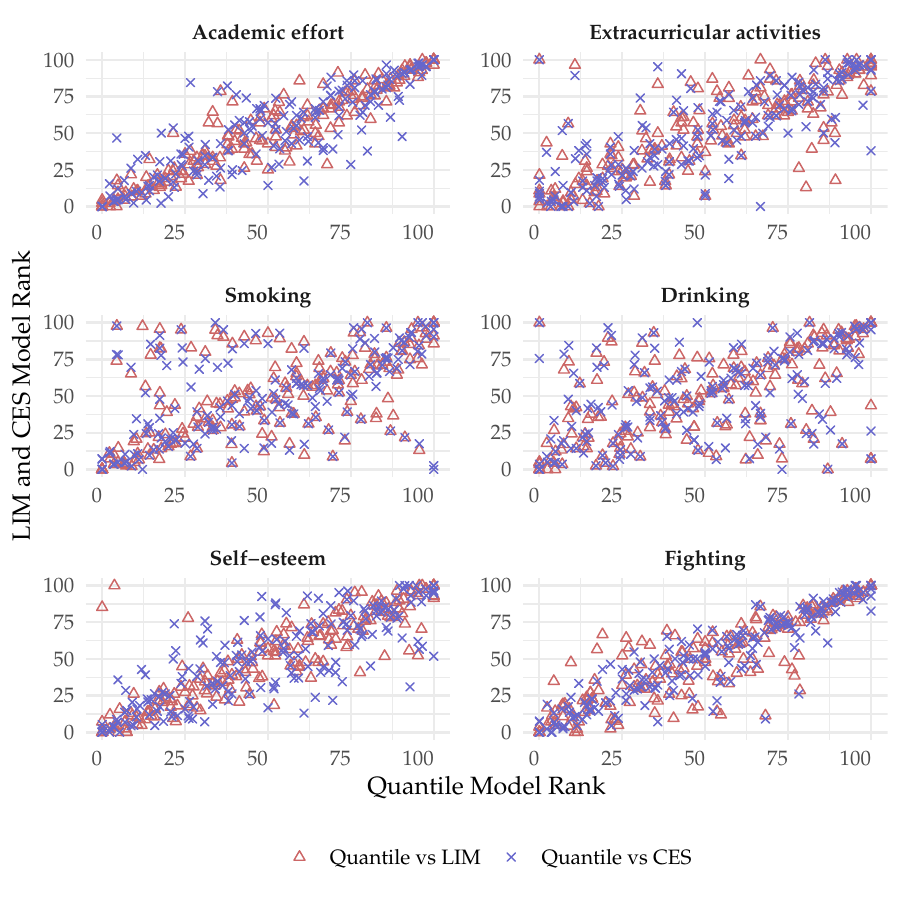}
		\caption{Influence Measure}
		\label{fig:counterfact}
		
		\vspace{-0.3cm}
		\footnotesize
		\justify
		The x-axis reports student ranks based on influence measure in the quantile model, while the y-axis reports ranks based on the LIM and CES models. Each red triangle represents a student’s LIM model rank (on the y-axis) against their quantile model rank (on the x-axis). Each blue “x” marker represents a student’s CES model rank (on the y-axis) versus their quantile model rank (on the x-axis).
	\end{figure}
	
	\section{Conclusion}\label{sec:conclusion}
	
	Social interactions are important because individuals’ decisions may be influenced by those of their peers or friends. Most papers that estimate peer effects rely on the standard LIM model, which assumes that each peer exerts the same influence on an individual. This assumption is restrictive, given that peers with high, intermediate, or low outcomes may have disproportionate influence depending on the context. To overcome this limitation, recent models, such as the CES-based specification by \citet{boucher2024toward}, allow peer effects to vary across the outcome distribution.

	This paper contributes to this literature by introducing a quantile peer effects model. Practitioners can define a set of quantile levels and estimate peer effects at each of these levels. The model is microfounded and captures peer influence arising from both spillover and conformity effects, as  is the case in the CES-based framework. A simple estimation strategy combining OLS and IV methods is proposed, along with an encompassing test to guide the selection of quantile levels. Monte Carlo simulations show that the test performs well and offers a practical tool for applied researchers.

	The proposed model is more flexible than the CES-based model, which restricts peer effects to vary monotonically across quantile levels. This flexibility is illustrated through an empirical application using multiple outcomes. The results reveal rich patterns of peer influence that are not adequately captured by either the CES or LIM specifications. In addition, the analysis of individual influence within the network shows that the restrictions imposed by these models can distort the identification of key players and limit the effectiveness of targeted policy interventions.
	
	Other models in the literature aim to capture heterogeneity in peer effects. For example, fully parameterized models allow each peer to exert a distinct influence. However, estimating such models can be challenging in practice due to the large number of parameters involved when agents have many friends. The quantile peer effects model offers a practical alternative. Since sample quantiles can be viewed as local averages of ranked peer outcomes, a large number of quantile levels is not required to capture peer influences adequately. Relying on these local averages of peer outcomes is conceptually similar to locally constant regressions in nonparametric models, where heterogeneity is captured through local variation.

	\newpage
	\appendix
	\renewcommand\thefigure{\thesection.\arabic{figure}}  
	\renewcommand\thetable{\thesection.\arabic{table}}  
	\setcounter{figure}{0} 
	\setcounter{table}{0} 
	\begin{center}
		\Large \bf Appendices
	\end{center}
	\section{Sample Quantiles}\label{Append:sampleQ}
	Let $d_i$ be the number of friends of agent $i$. For isolated agents (agents without peers), $d_i = 0$ and I set $q_{\tau,i}(\mathbf y_{-i}) = 0$. This decision is innocuous if the model includes separate intercepts for isolated and non-isolated agents.
	
	Assume that $d_i > 0$. I denote by $y_{i,(1)}$, \dots, $y_{i,(d_i)}$ the outcomes of agent $i$'s peers, ranked from the smallest to the largest, i.e., $y_{i,(1)} \leq \dots \leq y_{i,(d_i)}$. For all $\tau \in [0, 1]$, the sample quantile is defined as follows:
	\begin{equation}
		q_{\tau,i}(\mathbf y_{-i}) = (1 - \omega_i)y_{i,(\pi_i)}  + \omega_iy_{i,(\pi_i+1)}, \label{eq:defqtaui1}
	\end{equation}
	where $\pi_i = \lfloor\tau (d_i - 1)\rfloor + 1$ and $\omega_i = \tau (d_i - 1) - \lfloor\tau (d_i - 1)\rfloor$. The operator $\lfloor\cdot\rfloor$ denotes the floor function. Here, $\lfloor\tau (d_i - 1)\rfloor$ is the largest integer less than or equal to $\tau (d_i - 1)$. If $\tau (d_i - 1)$ is an integer, then $q_{\tau,i}(\mathbf y_{-i}) = y_{i,(\pi_i)}$; otherwise, $q_{\tau,i}(\mathbf y_{-i})$ is the weighted average of $y_{i,(\pi_i)}$ and $y_{i,(\pi_i + 1)}$. The closer $\tau (d_i - 1)$ is to $\lfloor\tau (d_i - 1)\rfloor$, the greater the weight of $y_{i,(\pi_i)}$. This definition of the sample quantile is popular and is known as the Type 7 sample quantile \citep[see][]{hyndman1996sample}. The case of weighted networks is presented in Online Appendix \ref{OA:WeightedNet}.
	
	%If $g_{ij}$ is not binary, then $y_{ij}$ can be weighted by $g_{ij}$, as is the case for average peer outcomes in the standard LIM model. Let $g_{i,(1)}$, \dots, $g_{i,(d_i)}$ the weights corresponding to $y_{i,(1)}$, \dots, $y_{i,(d_i)}$, respectively, and define $G_{i,l} = \sum_{k = 1}^{l} g_{i,(k)}$. Let $\hat{k}$ be the largest integer such that $\dfrac{G_{i,\hat k}}{G_{i,d_i}}  \leq \tau$. The weighted sample quantile is given by the same formula \eqref{eq:defqtaui1}, with the difference that: 
	%\begin{align*}
	%&\pi_i = \left\lfloor \hat k + 1 - \tau + (\tau G_{i,d_i} - G_{i,\hat k})/g_{i,(\hat k + 1)}\right\rfloor, \text{ and }\\
	%&\omega_i = \hat k + 1- \tau + (\tau G_{i,d_i} - G_{i,\hat k})/g_{i,(\hat k + 1)} - \pi_i.
	%\end{align*}
	%The higher the intensity of the outgoing link from $i$ to $\hat k + 1$, the higher the weight of $y_{i,(\hat k + 1)}$.

	\section{Existence and Uniqueness of Equilibrium}\label{Append:prop:NE}
	Let $\boldsymbol a = (a_1, ~\dots, ~ a_{\kappa})^{\prime} \in \mathbb R^{\kappa}$ be a $\kappa$-dimensional vector. For any integer $i \in [1, ~\kappa]$, let $\mathcal{R}_{\boldsymbol a}(a_i)$ denote the rank of the component $a_i$ within the vector $\boldsymbol a$. Specifically, $\mathcal{R}_{\boldsymbol a}(a_i) = 1$ if $a_i$ is the largest component of $\boldsymbol a$, $\mathcal{R}_{\boldsymbol a}(a_i) = 2$ if $a_i$ is the second largest, and so on, up to $\mathcal{R}_{\boldsymbol a}(a_i) = \kappa$ if $a_i$ is the smallest component of $\boldsymbol a$. In the case of ties, any tied component may be arbitrarily chosen as the larger rank, as this does not affect the quantiles of $\boldsymbol a$.
	
	The following lemma will be used in the remainder of this section.
	
	\begin{lemma}\label{lemmaRank}
		For any $\kappa$-dimensional  vectors $\boldsymbol a = (a_1, ~\dots, ~ a_{\kappa})^{\prime}$ and $\tilde{\boldsymbol a} = (\tilde a_1, ~\dots, ~ \tilde a_{\kappa})^{\prime}$, if  for some indices $i$ and $j$, $\mathcal{R}_{\boldsymbol a}(a_i) = \mathcal{R}_{\tilde{\boldsymbol a}}(\tilde a_j)$, there then exists an index $m$ such that $\lvert a_i - \tilde a_j\rvert \leq \lvert a_m - \tilde a_m \rvert$.
	\end{lemma}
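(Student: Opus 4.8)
The plan is to recognize Lemma~\ref{lemmaRank} as the statement that order statistics are $1$-Lipschitz with respect to the sup-norm distance between the two vectors. Writing $r := \mathcal{R}_{\boldsymbol a}(a_i) = \mathcal{R}_{\tilde{\boldsymbol a}}(\tilde a_j)$ for the common rank, the hypothesis says precisely that $a_i$ is the $r$-th largest entry of $\boldsymbol a$ and $\tilde a_j$ is the $r$-th largest entry of $\tilde{\boldsymbol a}$. I would set $\delta := \max_{1 \le m \le \kappa} \lvert a_m - \tilde a_m \rvert$ and let $m$ be an index attaining this maximum; this $m$ is exactly the candidate index in the conclusion, so it remains only to establish the bound $\lvert a_i - \tilde a_j \rvert \le \delta$.

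First I would record the elementary componentwise inequalities $a_k \ge \tilde a_k - \delta$ and $\tilde a_k \ge a_k - \delta$, which hold for every $k$ by the definition of $\delta$. The core is then a two-sided counting step. Since $\tilde a_j$ has rank $r$ in $\tilde{\boldsymbol a}$, the $r$ indices carrying ranks $1, \dots, r$ all satisfy $\tilde a_k \ge \tilde a_j$, because the value at a given rank is non-increasing in the rank index. For each such index, $a_k \ge \tilde a_k - \delta \ge \tilde a_j - \delta$, so at least $r$ entries of $\boldsymbol a$ lie at or above $\tilde a_j - \delta$; hence the $r$-th largest entry of $\boldsymbol a$, namely $a_i$, satisfies $a_i \ge \tilde a_j - \delta$. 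Running the identical argument with the roles of $\boldsymbol a$ and $\tilde{\boldsymbol a}$ interchanged gives $\tilde a_j \ge a_i - \delta$, and combining the two inequalities yields $\lvert a_i - \tilde a_j \rvert \le \delta = \lvert a_m - \tilde a_m \rvert$, which is the assertion.

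The one point requiring care---the main, if modest, obstacle---is the treatment of ties, since the rank map and the quantile function are well defined only up to an arbitrary tie-breaking rule. The argument above is built to be tie-robust: it uses only that \emph{at least} $r$ entries lie at or above the $r$-th largest value, a fact that holds under any admissible ordering of coincident entries, rather than any claim about exact counts. This keeps the counting step valid regardless of how ties are resolved, consistent with the observation in the text that the choice among tied components does not affect the quantiles of $\boldsymbol a$.
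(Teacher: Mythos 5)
Your proof is correct, but it takes a genuinely different route from the paper's. The paper proves Lemma~\ref{lemmaRank} by a case analysis on the relative orderings of $(a_i, a_j)$ and $(\tilde a_i, \tilde a_j)$: in the ``crossed'' configuration it sandwiches $a_i - \tilde a_j$ between $a_i - \tilde a_i$ and $a_j - \tilde a_j$, and in the ``aligned'' configuration (where $a_i < a_j$ and $\tilde a_i < \tilde a_j$) it uses the equality of ranks to deduce the existence of a crossing index $s$ with $a_s \leq a_i$ and $\tilde a_j \leq \tilde a_s$, yielding the bound with $m \in \{i, j, s\}$. You instead prove the statement as the $\ell^\infty$-Lipschitz property of order statistics: setting $\delta = \max_m \lvert a_m - \tilde a_m \rvert$, a two-sided counting argument (at least $r$ entries of $\tilde{\boldsymbol a}$ lie at or above $\tilde a_j$, hence at least $r$ entries of $\boldsymbol a$ lie at or above $\tilde a_j - \delta$, hence the $r$-th largest value $a_i \geq \tilde a_j - \delta$, and symmetrically) gives $\lvert a_i - \tilde a_j\rvert \leq \delta$ directly. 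Your counting step is sound, including the tie-robustness claim, since the value occupying a given rank is independent of how ties are broken. What the two approaches buy: your argument is uniform (no cases), standard, and delivers exactly the form in which the lemma is consumed in the proof of Proposition~\ref{prop:NE}, namely $\lvert y_{i,(\pi_i)} - \tilde y_{i,(\pi_i)}\rvert \leq \lVert \mathbf y - \tilde{\mathbf y} \rVert_{\infty}$; the paper's argument is more constructive in that it exhibits the witnessing index $m$ among $\{i, j, s\}$ rather than as an abstract maximizer, but this extra specificity is never used downstream.
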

	
	\begin{proof}
		I distinguish the following configurations: Case 1: $a_i \leq a_j$ and $\tilde a_j \leq \tilde a_i$; Case 2: $a_i \leq a_j$ and $\tilde a_i \leq \tilde a_j$; Case 3: $a_j \leq a_i$ and $\tilde a_i \leq \tilde a_j$; and Case 4: $a_j \leq a_i$ and $\tilde a_j \leq \tilde a_i$. I can ignore Cases 3 and 4, as they reduce to Cases 1 and 2, respectively, when $i$ and $j$ are switched.
		
		\medskip
		\noindent \underline{Case 1}: $a_i \leq a_j$ and $\tilde a_j \leq \tilde a_i$. This case implies that $a_i - \tilde a_j \leq a_j - \tilde a_j$ and $a_i - \tilde a_j \geq a_i - \tilde a_i$. Thus, $ a_i - \tilde a_i \leq a_i - \tilde a_j \leq a_i - \tilde a_j$ which means that $\lvert a_i - \tilde a_j \rvert \leq \max \{\lvert a_i - \tilde a_i \rvert, ~ \lvert a_j - \tilde a_j \rvert\}$. Consequently, there exists $m \in \{i, j\}$ such that $\lvert a_i - \tilde a_j\rvert \leq \lvert a_m - \tilde a_m \rvert$.

		\medskip
		\noindent \underline{Case 2}: $a_i \leq a_j$ and $\tilde a_i \leq \tilde a_j$. If $a_i = a_j$, then $a_i - \tilde a_j = a_j - \tilde a_j$. Thus, one can choose $m = j$. Similarly, if $\tilde a_i = \tilde a_j$, then $a_i - \tilde a_j = a_i - \tilde a_i$, and one can choose $m = i$. 
		
		Now, assume that $a_i \ne a_j$ and $\tilde a_i \ne \tilde a_j$; that is, $a_i < a_j$ and $\tilde a_i < \tilde a_j$. %It follows that $\boldsymbol a$ and $\tilde{\boldsymbol a}$ must have more than two components. Otherwise, $\boldsymbol a = (a_i, ~ a_j)^{\prime}$ and $\tilde{\boldsymbol a}  = (\tilde a_i, ~ \tilde a_j)^{\prime}$, then the conditions $a_i < a_j$ and $\tilde a_i < \tilde a_j$ would imply that $\mathcal{R}_{\boldsymbol a}(a_i) = 2$ and $\mathcal{R}_{\tilde{\boldsymbol a}}(a_j) = 1$, which would contradict the assumption that  $\mathcal{R}_{\boldsymbol a}(a_i) = \mathcal{R}_{\tilde{\boldsymbol a}}(a_j)$. Therefore, $\boldsymbol a$ and $\tilde{\boldsymbol a}$ must have at least three components. Specifically, since $a_i < a_j$ and $\tilde a_i < \tilde a_j$, 
		The only way for the condition $\mathcal{R}_{\boldsymbol a}(a_i) = \mathcal{R}_{\tilde{\boldsymbol a}}(a_j)$ to hold is that there exists an index $s$ such that:
		\begin{equation}\label{eq:us}
			a_s \leq a_i < a_j \quad \text{and} \quad 
			\tilde a_i < \tilde a_j \leq \tilde a_s.
		\end{equation}

		\noindent The existence of such an index $s$ is necessary because the condition $\mathcal{R}_{\boldsymbol a}(a_i) = \mathcal{R}_{\tilde{\boldsymbol a}}(a_j)$ involves the following two requirements: (1) the number of components in $\boldsymbol a$ that are greater than or equal to $a_i$ must be equal to the number of components in $\tilde{\boldsymbol a}$ that are greater than or equal to $\tilde a_j$; (2) the number of components in $\boldsymbol a$ that are less than or equal to $a_i$ must be equal to the number of components in $\tilde{\boldsymbol a}$ that are less than or equal to $\tilde a_j$. 
		It follows from \eqref{eq:us} that $a_s - \tilde a_s \leq a_i - \tilde a_j \leq a_j - \tilde a_j$. Thus, $\lvert a_i - \tilde a_j \rvert \leq \max \{\lvert a_j - \tilde a_j \rvert, ~ \lvert a_s - \tilde a_s \rvert\}$. Consequently, there exists $m \in \{j, s\}$ such that $\lvert a_i - \tilde a_j\rvert \leq \lvert a_m - \tilde a_m \rvert$. This completes the proof of the lemma.\end{proof}
	
	\paragraph{Proof of Proposition \ref{prop:NE}}
	The best response function is given by:
	$$BR_i(\mathbf y_{-i}) = (1 - \sum_{\tau \in \mathcal{T}} \lambda_{\tau,2})\alpha_i + \sum_{\tau \in \mathcal{T}}(\lambda_{\tau,1} + \lambda_{\tau,2})q_{i,\tau}(\mathbf y_{-i}).$$
	Let $BR$ be the mapping from $\mathbb R^n$ into itself, defined by $BR(\mathbf y) = (BR_1(\mathbf y_{-1}), ~\dots, \break BR_n(\mathbf y_{-n}))^{\prime}$. To establish the existence and uniqueness of the Nash equilibrium, it is sufficient to demonstrate that $BR$ has a unique fixed point. The challenge here is that $BR$ is not differentiable; therefore, I cannot show that its Jacobian has a norm less than one.
	
	Let $\mathbf y$  and $\tilde{\mathbf y}$ be two vectors of outcomes in $\mathbb R^n$. Let $\lVert . \rVert_{\infty}$ denote the infinity norm.\footnote{For any $\mathbf a = (a_1, ~\dots, ~ a_n)^{\prime} \in \mathbb R^n$, the infinity norm of $\mathbf a$ is defined as $\displaystyle \lVert \mathbf a \rVert_{\infty} = \max_{i} \lvert a_i \rvert$.}
	\begingroup
	\allowdisplaybreaks
	\begin{align}
		\lVert BR(\mathbf y) - BR(\tilde{\mathbf y}) \rVert_{\infty} &= \max_i \lvert BR_i(\mathbf y_{-i}) - BR_i(\tilde{\mathbf y}_{-i}) \rvert,\nonumber \\
		\lVert BR(\mathbf y) - BR(\tilde{\mathbf y}) \rVert_{\infty} & \textstyle \leq \sum_{\tau \in \mathcal{T}}\left\{\lvert \lambda_{\tau,1} + \lambda_{\tau,2}\rvert  \max_i  \lvert q_{i,\tau}(\mathbf y_{-i}) - q_{i,\tau}(\tilde{\mathbf y}_{-i}) \rvert\right\} \label{eq:BRy}
	\end{align}  
	\endgroup

	By Equation \eqref{eq:defqtaui1}, $q_{\tau,i}(\mathbf y_{-i}) = (1 - \omega_i)y_{i,(\pi_i)}  + \omega_iy_{i,(\pi_i+1)}$  and $q_{\tau,i}(\tilde{\mathbf y}_{-i}) = (1 - \omega_i)\tilde y_{i,(\pi_i)}  + \omega_i\tilde y_{i,(\pi_i+1)}$. Note that the weight $\omega_i$ and the index $\pi_i$ are the same for both $\mathbf y$  and $\tilde{\mathbf y}$ because $\omega_i$ and $\pi_i$ depend only on the network (particularly the degrees) and $\tau$.\footnote{The weight $\omega_i$ and the index $\pi_i$ do not depend on $\mathbf y$  and $\tilde{\mathbf y}$ because $q_{\tau,i}$ is an unweighted quantile. I generalize Proposition \ref{prop:NE} to the case of weighted networks in Online Appendix \ref{OA:WeightedNet}.} Thus,
	\begin{equation}\label{ineq:qtau}
		\lvert q_{i,\tau}(\mathbf y_{-i}) - q_{i,\tau}(\tilde{\mathbf y}_{-i}) \rvert \leq(1 - \omega_i)\lvert y_{i,(\pi_i)} -  \tilde y_{i,(\pi_i)} \rvert + \omega_i\lvert y_{i,(\pi_i+1)} - \tilde y_{i,(\pi_i+1)}\rvert.
	\end{equation}
	
	Since $\mathcal{R}_{\mathbf y}(y_{i,(\pi_i)}) = \mathcal{R}_{\tilde{\mathbf y}}(\tilde y_{i,(\pi_i)}) = \pi_i$, Lemma \ref{lemmaRank} implies that there exists an index $m$ such that $\lvert y_{i,(\pi_i)} -  \tilde y_{i,(\pi_i)} \rvert \leq \lvert y_{m} -  \tilde y_{m}\rvert$. Thus,  $\lvert y_{i,(\pi_i)} -  \tilde y_{i,(\pi_i)} \rvert \leq \lVert \mathbf y -  \tilde{\mathbf y}\rVert_{\infty}$. Similarly, since $\mathcal{R}_{\mathbf y}(y_{i,(\pi_i + 1)}) = \mathcal{R}_{\tilde{\mathbf y}}(\tilde y_{i,(\pi_i + 1)}) = \pi_i + 1$, there exists an index $p$, such that $\lvert y_{i,(\pi_i + 1)} -  \tilde y_{i,(\pi_i + 1)} \rvert \leq \lvert y_{p} -  \tilde y_{p}\rvert$, which implies  $\lvert y_{i,(\pi_i + 1)} -  \tilde y_{i,(\pi_i + 1)} \rvert \leq \lVert \mathbf y -  \tilde{\mathbf y}\rVert_{\infty}$. Therefore, \eqref{ineq:qtau} leads to:
	\begin{equation}\label{ineq:qtau2}
		\lvert q_{i,\tau}(\mathbf y_{-i}) - q_{i,\tau}(\tilde{\mathbf y}_{-i}) \rvert \leq\lVert \mathbf y -  \tilde{\mathbf y}\rVert_{\infty}.
	\end{equation}
	
	By substituting \eqref{ineq:qtau2} into \eqref{eq:BRy}, I obtain $\lVert BR(\mathbf y) - BR(\tilde{\mathbf y}) \rVert_{\infty}  \textstyle \leq  \sum_{\tau \in \mathcal{T}}\lvert \lambda_{\tau,1} + \lambda_{\tau,2}\rvert  \lVert \mathbf y -  \tilde{\mathbf y}\rVert_{\infty}$. Since $\sum_{\tau \in \mathcal{T}}\lvert \lambda_{\tau,1} + \lambda_{\tau,2}\rvert < 1$, this means that $BR$ is a contraction. By the contraction mapping theorem, it has a unique fixed point. As a result, the game has a unique Nash equilibrium.
	
	%\section{Identification of Model Parameters}\label{Append:prop:ident}
	%Given that $\tilde{\boldsymbol{\beta}}_1 = (1 - \lambda_2)\boldsymbol{\beta}_1$, the identification of both $\boldsymbol{\beta}_1$ and $\tilde{\boldsymbol{\beta}}_1$ implies that $\lambda_{2}$ is identified. Similarly, $\boldsymbol{\beta}_2$ is identified from the condition $\tilde{\boldsymbol{\beta}}_2 = (1 - \sum_{\tau} \lambda_{\tau,2})\boldsymbol{\beta}_2$, as both $\tilde{\boldsymbol{\beta}}_2$ and $\sum_{\tau} \lambda_{\tau,2}$ are identified.
	
	%Since $\lambda_{\tau,2} = \dfrac{\theta_{\tau,2}}{1 + \sum_{\tau} \theta_{\tau,2}}$, the identification of $\sum_{\tau} \lambda_{\tau,2}$ ensures that $\theta_2 = \sum_{\tau} \theta_{\tau,2}$ is identified. It also follows from $\lambda_{\tau,1} = \dfrac{\theta_{\tau,1}}{1 + \theta_2}$ and $\lambda_{\tau,2} = \dfrac{\theta_{\tau,2}}{1 + \theta_2}$ that $\theta_{\tau} = (1 + \theta_2)(\lambda_{\tau,1} + \lambda_{\tau,2})$. Consequently, $\theta_{\tau}$ is identified for all $\tau$, and $\theta_1 = \sum_{\tau} \theta_{\tau} - \theta_2$ is also identified.
	
	\section{Proof of Proposition \ref{prop:encompassing}}\label{Append:encompassing}
	The GMM estimator $\hat{\boldsymbol{\psi}}_b$ is given by
	\begin{equation}\label{eq:psihat}
		\hat{\boldsymbol{\psi}}_b = \hat{\mathbf{H}}_b^{-1} \hat{\mathbf{V}}_b^{niso\prime} \mathbf{Z}_b^{niso} \mathbf{W}_b^{niso} \mathbf{Z}_b^{niso\prime} \mathbf{y}^{niso},
	\end{equation}
	where $\hat{\mathbf{H}}_b = \hat{\mathbf{V}}_b^{niso\prime} \mathbf{Z}_b^{niso} \mathbf{W}_b^{niso} \mathbf{Z}_b^{niso\prime} \hat{\mathbf{V}}_b^{niso}$. To determine $\plim_{\mathcal{T}_a} \hat{\boldsymbol{\psi}}_b$, I assume that the data are generated according to the model with quantile set $\mathcal{T}_a$. In this case, the true data-generating process is $\mathbf{y}^{niso} = \mathbf{V}_a^{niso} \boldsymbol{\psi}_a + \boldsymbol{\varepsilon}_a^{niso}$, with $\boldsymbol{\varepsilon}_a^{niso} = \boldsymbol{\varepsilon}^{niso}$. Consequently, both $\mathbf{Z}_a^{niso}$ and $\mathbf{Z}_b^{niso}$ are exogenous with respect to $\boldsymbol{\varepsilon}_a^{niso}$. Substituting the expression for $\mathbf{y}^{niso}$ into Equation~\eqref{eq:psihat} yields:
	$$
	\hat{\boldsymbol{\psi}}_b = \hat{\mathbf{H}}_b^{-1} \hat{\mathbf{V}}_b^{niso\prime} \mathbf{Z}_b^{niso} \mathbf{W}_b^{niso} \mathbf{Z}_b^{niso\prime} \mathbf{V}_a^{niso} \boldsymbol{\psi}_a + \hat{\mathbf{H}}_b^{-1} \hat{\mathbf{V}}_b^{niso\prime} \mathbf{Z}_b^{niso} \mathbf{W}_b^{niso} \mathbf{Z}_b^{niso\prime} \boldsymbol{\varepsilon}_a^{niso}.
	$$

	The second term on the right-hand side is asymptotically zero because $\mathbf Z_b^{niso}$ is exogenous with respect to $\boldsymbol{\varepsilon}_a^{niso}$. Indeed, this term can be written as:
	$$
	\left(\frac{\hat{\mathbf{V}}_b^{niso\prime} \mathbf Z_b^{niso}}{S} \mathbf W_b^{niso} \frac{\mathbf Z_b^{niso\prime} \hat{\mathbf{V}}_b^{niso}}{S}\right)^{-1} \frac{\hat{\mathbf{V}}_b^{niso\prime} \mathbf Z_b^{niso}}{S} \mathbf W_b^{niso} \frac{\mathbf Z_b^{niso\prime} \boldsymbol{\varepsilon}^{niso}_a}{S},
	$$
	where $\dfrac{\mathbf Z_b^{niso\prime} \boldsymbol{\varepsilon}^{niso}_a}{S}$ is asymptotically zero by exogeneity of $\mathbf Z_b^{niso}$. Therefore,
	\begin{equation}\label{eq:plimapsi}
		\plim_{\mathcal{T}_a} \hat{\boldsymbol{\psi}}_b = \plim \left(\hat{\mathbf H}^{-1}_b\, \hat{\mathbf{V}}_b^{niso\prime} \mathbf Z_b^{niso} \mathbf W_b^{niso}\mathbf Z_b^{niso\prime} \mathbf V_a^{niso} \boldsymbol{\psi}_a\right).
	\end{equation}

	As $\boldsymbol\delta_{a,b} = \plim \hat{\boldsymbol{\psi}}_b - \plim_{\mathcal{T}_a} \hat{\boldsymbol{\psi}}_b$, Equations~\eqref{eq:psihat} and~\eqref{eq:plimapsi} imply that:
	\begingroup
	\allowdisplaybreaks
	\begin{align*}
		\boldsymbol\delta_{a,b}& = \plim \left(\hat{\mathbf H}^{-1}_b\, \hat{\mathbf{V}}_b^{niso\prime} \mathbf Z_b^{niso} \mathbf W_b^{niso}\mathbf Z_b^{niso\prime}(\mathbf y^{niso} - \mathbf V_a^{niso} \boldsymbol \psi_a)\right),\\
		\boldsymbol\delta_{a,b} &= \plim \hat{\mathbf H}^{-1}_b\, \hat{\mathbf{V}}_b^{niso\prime} \mathbf Z_b^{niso} \mathbf W_b^{niso}\mathbf Z_b^{niso\prime}\boldsymbol{\varepsilon}^{niso}_a.
	\end{align*}
	\endgroup
	
	\noindent As $\hat{\boldsymbol{\varepsilon}}_a^{niso}$ is a consistent estimator of $\boldsymbol{\varepsilon}^{niso}_a$ component by component, the previous equation suggests that $\boldsymbol{\delta}_{a,b}$ can be estimated by $\hat{\mathbf{H}}_b^{-1} \hat{\mathbf{V}}_b^{niso\prime} \mathbf{Z}_b^{niso} \mathbf{W}_b^{niso} \mathbf{Z}_b^{niso\prime} \hat{\boldsymbol{\varepsilon}}_a^{niso}$. However, componentwise consistency of $\hat{\boldsymbol{\varepsilon}}_a^{niso}$ is not sufficient to justify the substitution of $\boldsymbol{\varepsilon}^{niso}_a$ with $\hat{\boldsymbol{\varepsilon}}_a^{niso}$ in this expression. It is also necessary to ensure that $\dfrac{\mathbf{Z}_b^{niso\prime} \hat{\boldsymbol{\varepsilon}}_a^{niso}}{S}$ consistently estimates $\dfrac{\mathbf{Z}_b^{niso\prime} \boldsymbol{\varepsilon}^{niso}_a)}{S}$. I therefore introduce the following weak conditions.
	
	\begin{assumption}\label{ass:encompassing}The parameters $\boldsymbol{\beta}_1$ and $\boldsymbol{\psi}_a$ lie in compact spaces. Additionally, the variables $\boldsymbol{x}_{s,i}$ and $y_{s,i}$ are uniformly bounded in probability.
	\end{assumption}
	
	\noindent Under Assumption~\ref{ass:encompassing}, the derivatives of $(\mathbf{Z}_b^{niso\prime} \hat{\boldsymbol{\varepsilon}}_a^{niso})/S$ with respect to $\hat{\boldsymbol{\beta}}_1$ and $\hat{\boldsymbol{\psi}}_a$ are bounded in probability, implying that $\plim (\mathbf{Z}_b^{niso\prime} \hat{\boldsymbol{\varepsilon}}_a^{niso})/S = \plim (\mathbf{Z}_b^{niso\prime} \boldsymbol{\varepsilon}_a^{niso})/S$.\footnote{This follows from the mean value theorem, which implies that $(\mathbf{Z}_b^{niso\prime} \hat{\boldsymbol{\varepsilon}}_a^{niso})/S - (\mathbf{Z}_b^{niso\prime} \boldsymbol{\varepsilon}_a^{niso})/S$ is proportional to $\lVert \hat{\boldsymbol{\beta}}_1 - \boldsymbol{\beta}_1 \rVert + \lVert \hat{\boldsymbol{\psi}}_a - \boldsymbol{\psi}_a \rVert$.} As a result, $\hat{\boldsymbol{\delta}}_{a,b} = \hat{\mathbf{H}}_b^{-1} \hat{\mathbf{V}}_b^{niso\prime} \mathbf{Z}_b^{niso} \mathbf{W}_b^{niso} \mathbf{Z}_b^{niso\prime} \hat{\boldsymbol{\varepsilon}}_a^{niso}$ is a consistent estimator of $\boldsymbol{\delta}_{a,b}$. The asymptotic distribution of $\sqrt{S}\, (\hat{\boldsymbol{\delta}}_{a,b} - \boldsymbol{\delta}_{a,b})$ is presented in Online Appendix~\ref{OA:other:encompassing}.

	\bigskip
	\bigskip
	\let\oldthebibliography\thebibliography
	\let\endoldthebibliography\endthebibliography
	\renewenvironment{thebibliography}[1]{
		\begin{oldthebibliography}{#1}
			\setlength{\itemsep}{0.5em}
			\setlength{\parskip}{0em}
		}
		{
		\end{oldthebibliography}
	}
	{\linespread{1}
		\fontsize{11}{14}\selectfont
		\bibliography{References}
		\bibliographystyle{ecta}}
	
	%########################################################
	\newpage
	\setcounter{page}{1}

	\begin{mytitlepage}
		\title{Online Appendix to "\TITLE"}
		\maketitle
		
		\begin{abstract}{\linespread{1.2}\selectfont
				\noindent This supplemental appendix includes additional results and technical details omitted from the main text. Section~\ref{OA:WeightedNet} extends the microfoundations to the case of weighted networks. Section~\ref{OA:other} presents additional econometric results on the limiting distribution of the model estimator, the encompassing test, and the Wald-style test for the validity of the type II instruments.
		}\end{abstract}
	\end{mytitlepage}
	
	\bigskip
	\section{Microeconomic Foundations with Weighted Networks}\label{OA:WeightedNet}
	In this section, I demonstrate the existence and uniqueness of equilibrium and discuss instrumental variables for quantile peer outcomes in the context of weighted networks. The network is represented by an adjacency matrix $\mathbf{G} = [g_{ij}]$ of dimension $n \times n$, where $g_{ij}$ is a nonnegative variable (not necessarily binary) that measures the intensity of the outgoing link from agent $i$ to agent $j$.
	
	\subsection{Sample Weighted Quantiles}\label{OA:QWeightedNet}
	Since the network is weighted, I replace the sample quantile in Equation \eqref{eq:defqtaui} with a sample weighted quantile, where the weight of $y_{j}$ is $g_{ij}$. The definition of the sample quantile in Appendix \ref{Append:sampleQ} %\citepoa{hyndman1996sampleoa} 
	can be generalized to sample weighted quantiles. 
	
	Let $g_{i,(1)}$, \dots, $g_{i,(d_i)}$ be the weights corresponding to $y_{i,(1)}$, \dots, $y_{i,(d_i)}$, respectively, and define $G_{i,l} = \sum_{k = 1}^{l} g_{i,(k)}$. Let $\hat{k}$ be the largest integer such that $\dfrac{G_{i,\hat k}}{G_{i,d_i}}  \leq \tau$. The weighted sample quantile is given by formula \eqref{eq:defqtaui1} but with different weight $\omega_i$ and index $\pi_i$:
	$$q_{\tau,i}(\mathbf y_{-i}) = (1 - \omega_i)y_{i,(\pi_i)}  + \omega_iy_{i,(\pi_i+1)}, \text{ where }$$$$
	\pi_i = \left\lfloor \hat k + 1 - \tau + (\tau G_{i,d_i} - G_{i,\hat k})/g_{i,(\hat k + 1)}\right\rfloor \text{ and } $$$$
	\omega_i = \hat k + 1- \tau + (\tau G_{i,d_i} - G_{i,\hat k})/g_{i,(\hat k + 1)} - \pi_i.$$
	The higher the intensity of the outgoing link from $i$ to $\hat k + 1$, the greater the weight of $y_{i,(\hat k + 1)}$. The weight $\omega_i$ and the index $\pi_i$ depend on $\mathbf y_{-i}$.

	\subsection{Existence and Uniqueness of Equilibrium with Weighted Networks}\label{OA:NEWeightedNet}
	To establish the existence and uniqueness of the Nash equilibrium with weighted networks, I first present the following lemma.
	\begin{lemma}\label{lemmaQuant}
		For any outcomes $\mathbf y = (y_1, ~\dots, ~ y_n)^{\prime}$ and $\tilde{\mathbf y} = (\tilde y_1, ~\dots, ~ \tilde y_{n})^{\prime}$ and a quantile level $\tau \in [0, ~1]$, 
		we have $\lvert q_{\tau,i}(\mathbf y_{-i}) - q_{\tau,i}(\tilde{\mathbf y}_{-i}) \rvert \leq \lVert \mathbf y - \tilde{\mathbf y} \rVert_{\infty}$.
	\end{lemma}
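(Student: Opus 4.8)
The plan is to bypass the rank-matching argument used for the unweighted case (Lemma~\ref{lemmaRank}), which relied crucially on the weight $\omega_i$ and the index $\pi_i$ being independent of the outcome vector. In the weighted setting the cumulative weights $G_{i,l}$ are accumulated in the order induced by the peer outcomes, so $\hat k$, $\pi_i$, and $\omega_i$ all depend on $\mathbf y_{-i}$, and the direct decomposition of $q_{\tau,i}(\mathbf y_{-i}) - q_{\tau,i}(\tilde{\mathbf y}_{-i})$ into matched order statistics is no longer available. Instead, I would derive the bound from two structural properties of the weighted sample quantile, viewed as a function of the peer outcomes with the peer set, the weights $g_{ij}$, and the level $\tau$ held fixed: \emph{translation equivariance} and \emph{componentwise monotonicity}.

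Granting these two properties, the bound follows from a sandwich argument. Set $\epsilon = \lVert \mathbf y - \tilde{\mathbf y} \rVert_{\infty}$, so that $\tilde y_j - \epsilon \le y_j \le \tilde y_j + \epsilon$ for every peer $j$ of $i$. Monotonicity then sandwiches $q_{\tau,i}(\mathbf y_{-i})$ between the quantiles evaluated at the shifted vectors $\tilde{\mathbf y}_{-i} - \epsilon \mathbf 1$ and $\tilde{\mathbf y}_{-i} + \epsilon \mathbf 1$, where $\mathbf 1$ is the all-ones vector over $i$'s peers, and translation equivariance identifies these endpoints with $q_{\tau,i}(\tilde{\mathbf y}_{-i}) - \epsilon$ and $q_{\tau,i}(\tilde{\mathbf y}_{-i}) + \epsilon$. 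Combining the two resulting inequalities gives $\lvert q_{\tau,i}(\mathbf y_{-i}) - q_{\tau,i}(\tilde{\mathbf y}_{-i}) \rvert \le \epsilon$, which is the claim.

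Translation equivariance is immediate: adding a constant $c$ to all peer outcomes leaves their relative ranking, and hence the entire sorted weight sequence together with the quantities $\hat k$, $\pi_i$, and $\omega_i$, unchanged; since the interpolation weights $1-\omega_i$ and $\omega_i$ sum to one, the value in \eqref{eq:defqtaui1} increases by exactly $c$. For monotonicity I would represent $q_{\tau,i}(\mathbf y_{-i})$ as the generalized inverse, at level $\tau$, of the piecewise-linear weighted empirical CDF $\hat F_{\mathbf y_{-i}}$ whose interpolation reproduces the formula in Appendix~\ref{Append:sampleQ}. Pushing any single $y_j$ to the right shifts the mass $g_{ij}$ rightward and therefore can only decrease $\hat F_{\mathbf y_{-i}}(t)$ pointwise in $t$; inverting this pointwise domination yields $q_{\tau,i}(\mathbf y_{-i}) \le q_{\tau,i}(\mathbf y_{-i}')$ whenever $\mathbf y_{-i} \le \mathbf y_{-i}'$ componentwise.

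The main obstacle is the monotonicity step, precisely because $\hat k$, $\pi_i$, and $\omega_i$ can jump discontinuously when increasing one outcome changes the ordering of the peers: the cumulative weights $G_{i,l}$ are resorted, so the index selected by the formula can move abruptly. The CDF-inverse representation is attractive because it settles monotonicity globally without tracking these index changes. If a direct argument is preferred, the delicate point is to verify that the quantile does not fall across an order crossing; this can be checked by noting that at such a crossing the two swapped order statistics are equal, only the single cumulative weight $G_{i,r}$ at the crossing position changes, and the interpolated value therefore varies continuously there, so the weak increase accumulated away from crossings is never undone.
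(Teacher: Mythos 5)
Your overall strategy---translation equivariance plus componentwise monotonicity, combined through the sandwich $\tilde y_j - \epsilon \le y_j \le \tilde y_j + \epsilon$---is a genuinely different route from the paper's, which instead runs a case analysis on the interpolation indices ($\pi_i \ne \tilde\pi_i$ versus $\pi_i = \tilde\pi_i$) and reduces each case to rank-matching (Lemma~\ref{lemmaRank}) or to sandwich indices. Your translation-equivariance step is correct. But there is a genuine gap exactly at the point you flag as ``the main obstacle'': componentwise monotonicity is in fact \emph{false} for the weighted quantile defined in Online Appendix~\ref{OA:QWeightedNet}, and so is the lemma itself under that definition. Take an agent with three peers carrying weights $(1,\,2,\,1)$ and $\tau = 0.4$. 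For peer outcomes $(0,\,1,\,0.99)$ the sorted weights are $(1,1,2)$, so $G_{i,1}=1$, $G_{i,2}=2$, $G_{i,3}=4$, $\hat k = 1$, $\pi_i + \omega_i = 1.6 + 0.6/1 = 2.2$, and the quantile is $0.8(0.99)+0.2(1)=0.992$. For peer outcomes $(0,\,1,\,1.01)$ the sorted weights become $(1,2,1)$, so $G_{i,2}=3$, $\hat k = 1$, $\pi_i + \omega_i = 1.6 + 0.6/2 = 1.9$, and the quantile is $0.1(0)+0.9(1)=0.9$. Raising a single outcome by $0.02$ thus \emph{lowers} the quantile by $0.092$: monotonicity fails, continuity fails, and $\lvert 0.992 - 0.9 \rvert = 0.092 > 0.02 = \lVert \mathbf y - \tilde{\mathbf y} \rVert_{\infty}$ violates the statement being proved.

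Both of your proposed repairs founder on this example. The CDF-inverse route is effectively circular: the ``$\hat F$ whose interpolation reproduces the formula'' is just the inverse of the quantile function, so the assertion that moving $y_j$ rightward decreases $\hat F$ pointwise \emph{is} the monotonicity you are trying to prove; that inference is valid for the step-function weighted empirical CDF, but the paper's quantile is not its generalized inverse, and for the interpolated $\hat F$ the knot heights are tied to the sorted weight sequence, which is reshuffled at crossings. The direct route's key claim---that at an order crossing only $G_{i,r}$ changes and ``the interpolated value therefore varies continuously there''---is precisely what the example refutes: because the tied values carry different weights ($1$ versus $2$), the jump in $G_{i,r}$ moves $\pi_i$, the interpolation switches to a different, \emph{non-tied} order statistic ($y_{i,(1)}=0$ instead of $y_{i,(2)}$), and the value jumps from $1$ down to $0.9$. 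You are in good company: the paper's own proof breaks at the same spot, since the sandwich indices $s,p$ asserted in its Case 1 do not exist here (one would need a peer lying above the band $[y_{i,(2)}, y_{i,(3)}]=[0.99,1]$ in the first vector and below $\tilde y_{i,(1)} = 0$ in the second, and there is none). The constructive takeaway is that your argument is sound for any weighted quantile whose value is a continuous function of the outcomes---for example, the generalized inverse of the weighted empirical CDF---so under the paper's Type-7-style weighted interpolation what needs repairing is the definition, not merely the proof.
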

	\begin{proof}
		Following the notation in Section \ref{OA:QWeightedNet}, we have $q_{\tau,i}(\mathbf y_{-i}) = (1 - \omega_i)y_{i,(\pi_i)}  + \omega_iy_{i,(\pi_i+1)}$ and $q_{\tau,i}(\tilde{\mathbf y}_{-i}) = (1 - \tilde\omega_i)\tilde y_{i,(\tilde\pi_i)}  + \tilde\omega_i\tilde y_{i,(\tilde\pi_i+1)}$, where $\tilde\omega_i$ and $\tilde\pi_i$ are the weight and index for $\tilde{\mathbf y}_{-i}$. I first consider the case where $\pi_i \ne \tilde\pi_i$ and then the case where $\pi_i = \tilde\pi_i$.

		\medskip
		\noindent \underline{Case 1}: $\pi_i \ne \tilde\pi_i$. If $y_{i,(\pi_i)}$ and $\tilde y_{i,(\tilde\pi_i)}$ do not have the same rank within $\mathbf y_{-i}$ and $\tilde{\mathbf y}_{-i}$, respectively, then there exist indices $s$ and $p$ such that $y_{s} \leq y_{i,(\pi_i)} \leq y_{i,(\pi_i + 1)} \leq y_{p}$ and $\tilde y_{p} \leq \tilde y_{i,(\tilde\pi_i)} \leq \tilde y_{i,(\tilde\pi_i+1)} \leq \tilde y_{s}$. Thus, $y_{s} \leq q_{\tau,i}(\mathbf y_{-i}) \leq y_{p}$ and $\tilde y_{p} \leq q_{\tau,i}(\tilde{\mathbf y}_{-i}) \leq \tilde y_{s}$. Therefore, $y_{s} - \tilde y_{s} \leq q_{\tau,i}(\mathbf y_{-i}) - q_{\tau,i}(\tilde{\mathbf y}_{-i}) \leq y_{p} - \tilde y_{p}$, which implies that there exists an index $m\in\{s, p\}$ such that $\lvert q_{\tau,i}(\mathbf y_{-i}) - q_{\tau,i}(\tilde{\mathbf y}_{-i}) \rvert \leq \lvert y_{m} - \tilde y_{m} \rvert$. Consequently, $\lvert q_{\tau,i}(\mathbf y_{-i}) - q_{\tau,i}(\tilde{\mathbf y}_{-i}) \rvert \leq \lVert \mathbf y - \tilde{\mathbf y} \rVert_{\infty}$.
		
		\medskip
		\noindent \underline{Case 2}: $\pi_i = \tilde\pi_i$. If, in addition, $\omega_i = \tilde \omega_i$, then $\lvert q_{\tau,i}(\mathbf y_{-i}) - q_{\tau,i}(\tilde{\mathbf y}_{-i}) \rvert \leq (1 - \omega_i)\lvert y_{i,(\pi_i)} - \tilde y_{i,(\pi_i)}\rvert + \omega_i\lvert y_{i,(\pi_i + 1)} - \tilde y_{i,(\pi_i + 1)}\rvert $. By Lemma \ref{lemmaRank}, there exist indices $s$ and $p$, such that  $\lvert y_{i,(\pi_i)} - \tilde y_{i,(\pi_i)}\rvert \leq \lvert y_{s} - \tilde y_s \rvert$ and $\lvert y_{i,(\pi_i + 1)} - \tilde y_{i,(\pi_i + 1)}\rvert \leq \lvert y_{p} - \tilde y_p \rvert$. Therefore, $\lvert q_{\tau,i}(\mathbf y_{-i}) - q_{\tau,i}(\tilde{\mathbf y}_{-i}) \rvert \leq \lVert \mathbf y - \tilde{\mathbf y} \rVert_{\infty}$.
		
		Assume now that $\omega_i \ne \tilde \omega_i$. The only way this can hold while $\pi_i = \tilde\pi_i$ is if $\mathbf y_{-i}$ and $\tilde{\mathbf y}_{-i}$ are sorted differently, such that there exist indices $s$ and $p$ with $y_{s} \leq y_{i,(\pi_i)} \leq y_{i,(\pi_i + 1)} \leq y_{p}$ and $\tilde y_{p} \leq \tilde y_{i,(\tilde\pi_i)} \leq \tilde y_{i,(\tilde\pi_i+1)} \leq \tilde y_{s}$. Indeed, the condition $\omega_i \ne \tilde \omega_i$ requires that the sum of the weights of the $\pi_i$ (or, respectively, $\pi_i + 1$) smallest elements of $\mathbf y_{-i}$ be different from the sum of the weights of the $\pi_i$ (or, respectively, $\pi_i + 1$) smallest elements of $\tilde{\mathbf y}_{-i}$. As in Case 1, if $y_{s} \leq y_{i,(\pi_i)} \leq y_{i,(\pi_i + 1)} \leq y_{p}$ and $\tilde y_{p} \leq \tilde y_{i,(\tilde\pi_i)} \leq \tilde y_{i,(\tilde\pi_i+1)} \leq \tilde y_{s}$, then $\lvert q_{\tau,i}(\mathbf y_{-i}) - q_{\tau,i}(\tilde{\mathbf y}_{-i}) \rvert \leq \lVert \mathbf y - \tilde{\mathbf y} \rVert_{\infty}$. This completes the proof of the lemma.\end{proof}
	
	Using  Lemma \ref{lemmaQuant}, I establish the existence and uniqueness of equilibrium.
	
	\begin{proposition}\label{prop:NEWeight} Assume that the network is weighted and that the utility function of each individual $i\in\mathcal{N}$ is given \eqref{eq:utility}, with $\sum_{\tau \in \mathcal{T}}\lvert \lambda_{\tau} \rvert < 1$. Then, there exists a unique Nash equilibrium $\mathbf y^{\ast} = (y_1^{\ast}, ~\dots,   y_n^{\ast})^{\prime}$ such that $y_i^{\ast} = BR_i(\mathbf{y}^{\ast}_{-i})$ for all $i$.
	\end{proposition}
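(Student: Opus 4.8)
The plan is to follow the same strategy as in the proof of Proposition~\ref{prop:NE}: I would show that the best-response mapping $BR(\mathbf y) = (BR_1(\mathbf y_{-1}), ~\dots, ~ BR_n(\mathbf y_{-n}))^{\prime}$ is a contraction on $\mathbb R^n$ equipped with the infinity norm, so that existence and uniqueness of the Nash equilibrium follow immediately from the contraction mapping theorem. Since the best-response function retains the form $BR_i(\mathbf y_{-i}) = (1 - \lambda_2)\alpha_i + \sum_{\tau}\lambda_\tau q_{\tau,i}(\mathbf y_{-i})$ even when the quantiles are weighted, the algebraic skeleton of the argument is unchanged; only the treatment of the quantile term needs adapting.

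First I would bound, for any $\mathbf y, \tilde{\mathbf y} \in \mathbb R^n$, the quantity $\lVert BR(\mathbf y) - BR(\tilde{\mathbf y})\rVert_\infty$ exactly as in \eqref{eq:BRy}, using the triangle inequality to obtain $\lVert BR(\mathbf y) - BR(\tilde{\mathbf y})\rVert_\infty \leq \sum_{\tau}\lvert\lambda_\tau\rvert \max_i \lvert q_{\tau,i}(\mathbf y_{-i}) - q_{\tau,i}(\tilde{\mathbf y}_{-i})\rvert$. The essential ingredient is then the per-quantile Lipschitz inequality $\lvert q_{\tau,i}(\mathbf y_{-i}) - q_{\tau,i}(\tilde{\mathbf y}_{-i})\rvert \leq \lVert \mathbf y - \tilde{\mathbf y}\rVert_\infty$. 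In the unweighted case this followed from \eqref{ineq:qtau} together with Lemma~\ref{lemmaRank}, crucially because the weight $\omega_i$ and the index $\pi_i$ were common to $\mathbf y$ and $\tilde{\mathbf y}$. That simplification is no longer available here: for weighted quantiles, $\omega_i$ and $\pi_i$ depend on $\mathbf y_{-i}$ through the cumulative weights of the sorted outcomes, so the two quantiles may be evaluated at different ranks and with different interpolation weights.

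The key step is therefore to invoke Lemma~\ref{lemmaQuant}, which establishes precisely this Lipschitz bound for weighted sample quantiles, handling both the case $\pi_i \neq \tilde\pi_i$ and the case $\pi_i = \tilde\pi_i$ with $\omega_i \neq \tilde\omega_i$. Substituting it into the bound above yields $\lVert BR(\mathbf y) - BR(\tilde{\mathbf y})\rVert_\infty \leq \left(\sum_{\tau}\lvert\lambda_\tau\rvert\right)\lVert \mathbf y - \tilde{\mathbf y}\rVert_\infty$, and since $\sum_{\tau}\lvert\lambda_\tau\rvert < 1$ by assumption, $BR$ is a contraction; the contraction mapping theorem then gives a unique fixed point and hence a unique Nash equilibrium.

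I expect the only genuine difficulty to be the per-quantile Lipschitz bound, and this difficulty has been isolated into Lemma~\ref{lemmaQuant}. The subtlety there---and the reason the unweighted argument does not transfer verbatim---is the dependence of the interpolation weight $\omega_i$ and index $\pi_i$ on the outcome ranking: when the two outcome vectors are sorted differently, one must locate, via indices $s$ and $p$ that sandwich both weighted quantiles, a coordinate $m$ realizing $\lvert q_{\tau,i}(\mathbf y_{-i}) - q_{\tau,i}(\tilde{\mathbf y}_{-i})\rvert \leq \lvert y_m - \tilde y_m\rvert$. Once that lemma is in hand, the remainder of the proof is a routine repetition of the contraction argument from Appendix~\ref{Append:prop:NE}.
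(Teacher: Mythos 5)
Your proposal is correct and follows essentially the same route as the paper: the paper's proof of Proposition~\ref{prop:NEWeight} likewise bounds $\lVert BR(\mathbf y) - BR(\tilde{\mathbf y})\rVert_\infty$ by $\sum_{\tau}\lvert\lambda_\tau\rvert \max_i \lvert q_{\tau,i}(\mathbf y_{-i}) - q_{\tau,i}(\tilde{\mathbf y}_{-i})\rvert$, invokes Lemma~\ref{lemmaQuant} for the per-quantile Lipschitz bound, and concludes by the contraction mapping theorem. You have also correctly identified where the unweighted argument breaks down (the dependence of $\omega_i$ and $\pi_i$ on $\mathbf y_{-i}$) and that this difficulty is exactly what Lemma~\ref{lemmaQuant} resolves.
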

	
	\begin{proof}
		Let $BR$ be the mapping from $\mathbb R^n$ into itself, defined by $BR(\mathbf y) = (BR_1(\mathbf y_{-1}), ~\dots, \break BR_n(\mathbf y_{-n}))^{\prime}$. As in Appendix \ref{Append:prop:NE}, it is sufficient to demonstrate that $BR$ has a unique fixed point.   We have 
		\begingroup
		\allowdisplaybreaks
		\begin{align*}
			\lVert BR(\mathbf y) - BR(\tilde{\mathbf y}) \rVert_{\infty} &= \max_i \lvert BR_i(\mathbf y_{-i}) - BR_i(\tilde{\mathbf y}_{-i}) \rvert,\nonumber \\
			\lVert BR(\mathbf y) - BR(\tilde{\mathbf y}) \rVert_{\infty} & \textstyle \leq \sum_{\tau \in \mathcal{T}}\left\{\lvert \lambda_{\tau,1} + \lambda_{\tau,2}\rvert  \max_i  \lvert q_{i,\tau}(\mathbf y_{-i}) - q_{i,\tau}(\tilde{\mathbf y}_{-i}) \rvert\right\} 
		\end{align*}  
		\endgroup
		By Lemma \ref{lemmaQuant}, $\lvert q_{\tau,i}(\mathbf y_{-i}) - q_{\tau,i}(\tilde{\mathbf y}_{-i}) \rvert \leq \lVert \mathbf y - \tilde{\mathbf y} \rVert_{\infty}$, which implies that \break $\lVert BR(\mathbf y) - BR(\tilde{\mathbf y}) \rVert_{\infty}  \textstyle \leq  \sum_{\tau \in \mathcal{T}}\lvert \lambda_{\tau,1} + \lambda_{\tau,2}\rvert  \lVert \mathbf y -  \tilde{\mathbf y}\rVert_{\infty}$. Since $\sum_{\tau \in \mathcal{T}}\lvert \lambda_{\tau,1} + \lambda_{\tau,2}\rvert < 1$, it follows that $BR$ is a contraction. By the contraction mapping theorem, it has a unique fixed point. As a result, the game has a unique Nash equilibrium.
	\end{proof}
	
	\section{Supplementary Results on the Econometric Model} \label{OA:other}
	For notational ease, I assume throughout this section that Equations~\eqref{eq:model:iso} and~\eqref{eq:model:niso} are expressed in deviations from the group mean, so that the fixed effect terms are eliminated. I use the superscript $iso$ to denote variables corresponding to isolated agents, and the superscript $niso$ for variables corresponding to non-isolated agents.
	
	\subsection{Limiting Distribution of the Parameter Estimator}
	\label{OA:other:limit}
	
	Estimation proceeds in two stages. The first stage is an ordinary least-squares (OLS) regression based on the equation:
	$$
	\mathbf{y}^{iso} = \mathbf{X}^{iso} \boldsymbol{\beta}_1 + \boldsymbol{\varepsilon}^{iso},
	$$
	where $\mathbf{y}^{iso}$ is the vector of $y_{s,i}^{iso}$, $\mathbf{X}^{iso}$ is the matrix whose rows are $\boldsymbol{x}_{s,i}^{iso}$, and $\boldsymbol{\varepsilon}^{iso}$ is the vector of $\varepsilon_{s,i}^{iso}$. Let $\hat{\boldsymbol{\beta}}_1$ denote the OLS estimator of $\boldsymbol{\beta}_1$.
	
	The second stage is a generalized method of moments (GMM) estimation based on the equation:
	$$
	\mathbf{y}^{niso} = \mathbf{V}^{niso} \boldsymbol{\psi} + \boldsymbol{\varepsilon}^{niso},
	$$
	where $\mathbf{y}^{niso}$ is the vector of $y_{s,i}^{niso}$, and $\mathbf{V}^{niso}$ is the matrix whose row corresponding to agent $i$ in subnetwork $s$ is $(q_{s,i,\tau_1}^{niso},~\dots,~q_{s,i,\tau_{d_t}}^{niso},~\boldsymbol{x}_{s,i}^{niso\prime} \boldsymbol{\beta}_1,~\bar{\boldsymbol{x}}_{s,i}^{niso\prime})$. The parameter vector is $\boldsymbol{\psi} = (\lambda_{\tau_1},~\dots,~\lambda_{\tau_{d_t}},~1 - \lambda_2,~\tilde{\boldsymbol{\beta}}_2)^{\prime}$, and $\boldsymbol{\varepsilon}^{niso}$ is the vector of $\varepsilon_{s,i}^{niso}$. 
	
	The matrix $\mathbf{V}^{niso}$ depends on the unknown parameter $\boldsymbol{\beta}_1$. The GMM estimation is carried out by replacing $\boldsymbol{\beta}_1$ with its OLS estimator $\hat{\boldsymbol{\beta}}_1$. Let $\hat{\mathbf{V}}^{niso}$ denote the matrix obtained by substituting $\hat{\boldsymbol{\beta}}_1$ for $\boldsymbol{\beta}_1$ in $\mathbf{V}^{niso}$, and let $\hat{\boldsymbol{\psi}}$ denote the resulting GMM estimator of $\boldsymbol{\psi}$.
	
	The first-order conditions from the OLS and GMM steps imply:
	\begingroup
	\allowdisplaybreaks
	\begin{align}\label{eq:foc}
		\begin{split}
			&\mathbf{X}^{iso\prime} \mathbf{X}^{iso} (\hat{\boldsymbol{\beta}}_1 - \boldsymbol{\beta}_1) = \mathbf{X}^{iso\prime} \boldsymbol{\varepsilon}^{iso}, \\
			&\hat{\mathbf{V}}^{niso\prime} \mathbf{Z}^{niso} \mathbf{W}^{niso} \mathbf{Z}^{niso\prime} (\hat{\mathbf{V}}^{niso} \hat{\boldsymbol{\psi}} - \mathbf{V}^{niso} \boldsymbol{\psi}) = \hat{\mathbf{V}}^{niso\prime} \mathbf{Z}^{niso} \mathbf{W}^{niso} \mathbf{Z}^{niso\prime} \boldsymbol{\varepsilon}^{niso},
		\end{split}
	\end{align}
	\endgroup
	where $\mathbf{Z}^{niso}$ is the instrument matrix and $\mathbf{W}^{niso}$ is the GMM weighting matrix.

	Let $\boldsymbol{\Gamma} = (\boldsymbol{\beta}_1^{\prime}, ~ \boldsymbol{\psi}^{\prime})^{\prime}$ be the full parameter vector and $\hat{\boldsymbol{\Gamma}} = (\hat{\boldsymbol{\beta}}_1^{\prime}, ~ \hat{\boldsymbol{\psi}}^{\prime})^{\prime}$ its estimator. By applying the mean value theorem to $\hat{\mathbf{V}}^{niso} \hat{\boldsymbol{\psi}} - \mathbf{V}^{niso} \boldsymbol{\psi}$, with respect to $\hat{\boldsymbol{\Gamma}}$, \eqref{eq:foc} implies:
	$$
	\underbrace{\begin{pmatrix}
			\mathbf{I}_{d_1} & \mathbf{0} \\
			\mathbf{0} & \hat{\mathbf{V}}^{niso\prime} \mathbf{Z}^{niso} \mathbf{W}^{niso}
	\end{pmatrix}}_{\mathbb{B}}
	\underbrace{\begin{pmatrix}
			\mathbf{X}^{iso\prime} \mathbf{X}^{iso} & \mathbf{0} \\
			(1 - \lambda_2) \mathbf{Z}^{niso\prime} \mathbf{X}^{niso} & \mathbf{Z}^{niso\prime} \mathbf{V}^{niso}
	\end{pmatrix}}_{\mathbb{F}}
	(\hat{\boldsymbol{\Gamma}} - \boldsymbol{\Gamma}) = \mathbb{B} \boldsymbol{u},
	$$
	where $\mathbf{I}_{d_1}$ is the $d_1 \times d_1$ identity matrix, and
	$$
	\boldsymbol{u} = \begin{pmatrix}
		\mathbf{X}^{iso\prime} \boldsymbol{\varepsilon}^{iso} \\
		\mathbf{Z}^{niso\prime} \boldsymbol{\varepsilon}^{niso}
	\end{pmatrix}.
	$$

	By applying a standard central limit theorem to $\boldsymbol{u} / \sqrt{S}$, it follows that $\sqrt{S}(\hat{\boldsymbol{\Gamma}} - \boldsymbol{\Gamma})$ is asymptotically normally distributed with mean zero and asymptotic variance
	$$
	\left( \frac{\mathbb{B}}{S}\, \frac{\mathbb{F}}{S} \right)^{-1}
	\frac{\mathbb{B}}{S} \, \mathbb{V}ar(\boldsymbol{u}/\sqrt{S})
	\frac{\mathbb{B}^\prime}{S} \,
	\left( \frac{\mathbb{F}^\prime}{S} \,\frac{\mathbb{B}^\prime}{S} \right)^{-1},
	$$
	where $\mathbb{V}ar \left( \boldsymbol{u} / \sqrt{S} \right)$ denotes the asymptotic variance of $\boldsymbol{u} / \sqrt{S}$, which can be easily estimated as is the case in standard GMM procedures. 
	
	The asymptotic variance of the structural parameters can then be obtained from the asymptotic variance of $\sqrt{S}(\hat{\boldsymbol{\Gamma}} - \boldsymbol{\Gamma})$ using the Delta method.
	
	\subsection{\texorpdfstring{Asymptotic Distribution of $\sqrt{S}\, (\hat{\boldsymbol{\delta}}_{a,b} - \boldsymbol{\delta}_{a,b})$}{Asymptotic Distribution of sqrt(S)(delta_ab hat - delta_ab)}}
	\label{OA:other:encompassing}
	The estimator $\hat{\boldsymbol{\delta}}_{a,b}$ is given by
	\begin{equation}\label{eq:delta}
		\hat{\boldsymbol{\delta}}_{a,b} = \hat{\mathbf{H}}_b^{-1} \hat{\mathbf{V}}_b^{niso\prime} \mathbf{Z}_b^{niso} \mathbf{W}_b^{niso} \mathbf{Z}_b^{niso\prime} \hat{\boldsymbol{\varepsilon}}_a^{niso},
	\end{equation}
	where $\hat{\mathbf{H}}_b = \hat{\mathbf{V}}_b^{niso\prime} \mathbf{Z}_b^{niso} \mathbf{W}_b^{niso} \mathbf{Z}_b^{niso\prime} \hat{\mathbf{V}}_b^{niso}$. Given that $\hat{\boldsymbol{\varepsilon}}_a^{niso} = \mathbf y^{niso} - \hat{\mathbf V}_a^{niso} \hat{\boldsymbol{\psi}}_a$ and the representation of $\mathbf y^{niso}$ under $\mathcal{T}_a$ is $\mathbf y^{niso} = \mathbf V_a^{niso} \boldsymbol{\psi}_a + \boldsymbol{\varepsilon}_a^{niso}$, it follows that $\hat{\boldsymbol{\varepsilon}}_a^{niso} = \mathbf V_a^{niso} \boldsymbol{\psi}_a - \hat{\mathbf V}_a^{niso} \hat{\boldsymbol{\psi}}_a + \boldsymbol{\varepsilon}_a^{niso}$. Substituting $\hat{\boldsymbol{\varepsilon}}_a^{niso}$ into \eqref{eq:delta} and rearranging terms yields
	\begin{align}
		&\hat{\mathbf{H}}_b\hat{\boldsymbol{\delta}}_{a,b} = \hat{\mathbf{V}}_b^{niso\prime} \mathbf{Z}_b^{niso} \mathbf{W}_b^{niso} \mathbf{Z}_b^{niso\prime} (\mathbf V_a^{niso} \boldsymbol{\psi}_a - \hat{\mathbf V}_a^{niso} \hat{\boldsymbol{\psi}}_a + \boldsymbol{\varepsilon}_a^{niso}),\nonumber\\
		& \hat{\mathbf{V}}_b^{niso\prime} \mathbf{Z}_b^{niso} \mathbf{W}_b^{niso} \mathbf{Z}_b^{niso\prime}(\hat{\mathbf V}_b^{niso}\hat{\boldsymbol{\delta}}_{a,b}  + \hat{\mathbf V}_a^{niso} \hat{\boldsymbol{\psi}}_a - \mathbf V_a^{niso} \boldsymbol{\psi}_a) = \hat{\mathbf{V}}_b^{niso\prime} \mathbf{Z}_b^{niso} \mathbf{W}_b^{niso} \mathbf{Z}_b^{niso\prime} \boldsymbol{\varepsilon}_a^{niso},\nonumber\\
		\begin{split}& \mathbb B^{\ast}_{b} \mathbf{Z}_b^{niso\prime}(\hat{\mathbf V}_b^{niso}\hat{\boldsymbol{\delta}}_{a,b} - \mathbf V_b^{niso}\boldsymbol{\delta}_{a,b}  + \hat{\mathbf V}_a^{niso} \hat{\boldsymbol{\psi}}_a - \mathbf V_a^{niso} \boldsymbol{\psi}_a) = \\ & 
			\quad\quad\quad\quad \mathbb B^{\ast}_{b} \mathbf{Z}_b^{niso\prime} (\boldsymbol{\varepsilon}_a^{niso} - \mathbf V_b^{niso}\boldsymbol{\delta}_{a,b}),
		\end{split} \label{eq:ortho:delta}
	\end{align}
	where $\mathbb{B}^{\ast}_{b} = \hat{\mathbf{V}}_b^{niso\prime} \mathbf{Z}_b^{niso} \mathbf{W}_b^{niso}$. Equation~\eqref{eq:ortho:delta} is the orthogonality condition for $\hat{\boldsymbol{\delta}}_{a,b}$. This equation depends on the estimators $\hat{\boldsymbol{\beta}}_1$ and $\hat{\boldsymbol{\psi}}_a$. The orthogonality conditions for these estimators follow from Equation~\eqref{eq:foc}:
	\begin{align}\label{eq:ortho:psia}
		\begin{split}
			&\mathbf{X}^{iso\prime} \mathbf{X}^{iso} (\hat{\boldsymbol{\beta}}_1 - \boldsymbol{\beta}_1) = \mathbf{X}^{iso\prime} \boldsymbol{\varepsilon}^{iso}, \\
			&\mathbb B^{\ast}_{a} \mathbf{Z}^{niso\prime}_a (\hat{\mathbf{V}}^{niso}_a \hat{\boldsymbol{\psi}}_a - \mathbf{V}^{niso}_a \boldsymbol{\psi}_a) = \mathbb B^{\ast}_{a} \mathbf{Z}^{niso\prime}_a \boldsymbol{\varepsilon}^{niso}_a,
		\end{split}
	\end{align}
	where $\mathbb B^{\ast}_{a} = \hat{\mathbf{V}}_a^{niso\prime} \mathbf{Z}_a^{niso} \mathbf{W}_a^{niso}$. Let $\boldsymbol{\Gamma}^{\ast} = (\boldsymbol{\beta}_1^{\prime}, ~ \boldsymbol{\psi}^{\prime}_a, ~\boldsymbol{\delta}_{a,b})^{\prime}$ be the full parameter vector, and let $\hat{\boldsymbol{\Gamma}}^{\ast} = (\hat{\boldsymbol{\beta}}_1^{\prime}, ~ \hat{\boldsymbol{\psi}}_a^{\prime}, ~\hat{\boldsymbol{\delta}}_{a,b})^{\prime}$ be its estimator. By applying the mean value theorem to $\hat{\mathbf V}_b^{niso}\hat{\boldsymbol{\delta}}_{a,b} - \mathbf V_b^{niso}\boldsymbol{\delta}_{a,b}$ and $\hat{\mathbf V}_a^{niso} \hat{\boldsymbol{\psi}}_a - \mathbf V_a^{niso} \boldsymbol{\psi}_a$ with respect to $\hat{\boldsymbol{\Gamma}}^{\ast}$ \eqref{eq:ortho:delta} and \eqref{eq:ortho:psia} imply:
	$$
	\underbrace{\begin{pmatrix}
			\mathbf{I}_{d_1} & \mathbf{0} & \mathbf{0} \\
			\mathbf{0} & \mathbb B^{\ast}_{a} & \mathbf{0} \\
			\mathbf{0} & \mathbf{0} & \mathbb B^{\ast}_{b}
	\end{pmatrix}}_{\mathbb{B}^{\ast}}
	\underbrace{\begin{pmatrix}
			\mathbf{X}^{iso\prime} \mathbf{X}^{iso} & \mathbf{0} & \mathbf{0} \\
			(1 - \lambda_{2,a}) \mathbb{F}^{\ast}_{a1} & \mathbb{F}^{\ast}_{a2} & \mathbf{0} \\
			(1 - \lambda_{2,a} + \delta_{a,b,1})\mathbb{F}^{\ast}_{b1} & \mathbb{F}^{\ast}_{b2} & \mathbb{F}^{\ast}_{b3}\end{pmatrix}}_{\mathbb{F}^{\ast}}
	(\hat{\boldsymbol{\Gamma}}^{\ast} - \boldsymbol{\Gamma}^{\ast}) = \mathbb{B}^{\ast} \boldsymbol{u}^{\ast},
	$$
	where $\mathbb{F}^{\ast}_{a1} = \mathbf{Z}^{niso\prime}_a \mathbf{X}^{niso}$, $\mathbb{F}^{\ast}_{a2} = \mathbf{Z}^{niso\prime}_a \mathbf{V}^{niso}_a$, $\mathbb{F}^{\ast}_{b1} = \mathbf{Z}^{niso\prime}_b \mathbf{X}^{niso}$, $\mathbb{F}^{\ast}_{b2} = \mathbf{Z}^{niso\prime}_b \mathbf{V}^{niso}_a$ and $\mathbb{F}^{\ast}_{b3} = \mathbf{Z}^{niso\prime}_b \mathbf{V}^{niso}_b$. The terms $1 - \lambda_{2,a}$ and $\delta_{a,b,1}$ are the coefficients associated with $\boldsymbol{x}_{s,i}^{\prime} \boldsymbol{\beta}_1$ in $\boldsymbol{\psi}_a$ and $\boldsymbol{\delta}_{a,b}$, respectively. The vector $\boldsymbol{u}^{\ast}$ is given by
	$$
	\boldsymbol{u}^{\ast} = \begin{pmatrix}
		\mathbf{X}^{iso\prime} \boldsymbol{\varepsilon}^{iso} \\
		\mathbf{Z}^{niso\prime}_a \boldsymbol{\varepsilon}^{niso}_a \\
		\mathbf{Z}^{niso\prime}_b (\boldsymbol{\varepsilon}_a^{niso} - \mathbf V_b^{niso}\boldsymbol{\delta}_{a,b})
	\end{pmatrix}.
	$$
	
	By applying a standard central limit theorem to $\boldsymbol{u}^{\ast} / \sqrt{S}$, it follows that the asymptotic variance of $\sqrt{S} (\hat{\boldsymbol{\Gamma}}^{\ast} - \boldsymbol{\Gamma}^{\ast})$ is
	$$
	\left( \frac{\mathbb{B^{\ast}}}{S}\, \frac{\mathbb{F}^{\ast}}{S} \right)^{-1}
	\frac{\mathbb{B}^{\ast}}{S} \, \mathbb{V}ar(\boldsymbol{u}^{\ast}/\sqrt{S})
	\frac{\mathbb{B}^{\ast\prime}}{S} \,
	\left( \frac{\mathbb{F}^{\ast\prime}}{S} \,\frac{\mathbb{B}^{\ast\prime}}{S} \right)^{-1},
	$$
	where $\mathbb{V}ar(\boldsymbol{u}^{\ast} / \sqrt{S})$ denotes the asymptotic variance of $\boldsymbol{u}^{\ast} / \sqrt{S}$. From this result, the asymptotic variance of $\sqrt{S} (\hat{\boldsymbol{\delta}}_{a,b} - \boldsymbol{\delta}_{a,b})$ can be directly obtained, as it is to a subvector of $\sqrt{S} (\hat{\boldsymbol{\Gamma}}^{\ast} - \boldsymbol{\Gamma}^{\ast})$.

	\subsection{Variance of \texorpdfstring{$\sqrt{S}(\hat{\boldsymbol{\psi}}_1 - \hat{\boldsymbol{\psi}}_2)$}{the Parameter Difference}}
	\label{OA:other:variance}
	As $\mathbf{Z}_1$ is valid, $\hat{\boldsymbol{\psi}}_1$ converges in probability to $\boldsymbol{\psi}$. Let $\boldsymbol{\psi}^{\dagger}$ denote the probability limit of $\hat{\boldsymbol{\psi}}_2$, which may differ from $\boldsymbol{\psi}$ if $\mathbf{Z}_1$ is endogenous. Let $\boldsymbol{\Gamma}^{\dagger} = (\boldsymbol{\beta}_1^{\prime}, ~ \boldsymbol{\psi}^{\prime}, ~\boldsymbol{\psi}^{\dagger\prime})^{\prime}$ denote the full parameter vector, and let $\hat{\boldsymbol{\Gamma}}^{\dagger} = (\hat{\boldsymbol{\beta}}_1^{\prime}, ~ \hat{\boldsymbol{\psi}}_1^{\prime}, ~\hat{\boldsymbol{\psi}}_2^{\prime})^{\prime}$ be its consistent estimator.

	The orthogonality conditions for $\hat{\boldsymbol{\beta}}_1$, $\hat{\boldsymbol{\psi}}_1$, and $\hat{\boldsymbol{\psi}}_2$ follow from~\eqref{eq:foc}. Applying the mean value theorem to the resulting equations with respect to $\hat{\boldsymbol{\Gamma}}^{\dagger}$ yields:
	$$
	\underbrace{\begin{pmatrix}
			\mathbf{I}_{d_1} & \mathbf{0} & \mathbf{0} \\
			\mathbf{0} & \mathbb B^{\dagger}_{2} & \mathbf{0} \\
			\mathbf{0} & \mathbf{0} & \mathbb B^{\dagger}_{3}
	\end{pmatrix}}_{\mathbb{B}^{\dagger}}
	\underbrace{\begin{pmatrix}
			\mathbf{X}^{iso\prime} \mathbf{X}^{iso} & \mathbf{0} & \mathbf{0} \\
			(1 - \lambda_{2,1}) \mathbb{F}^{\dagger}_{21} & \mathbb{F}^{\dagger}_{22} & \mathbf{0} \\
			(1 - \lambda_{2,2})\mathbb{F}^{\dagger}_{31} & \mathbf{0} & \mathbb{F}^{\dagger}_{33}\end{pmatrix}}_{\mathbb{F}^{\dagger}}
	(\hat{\boldsymbol{\Gamma}}^{\dagger} - \boldsymbol{\Gamma}^{\dagger}) = \mathbb{B}^{\dagger} \boldsymbol{u}^{\dagger},
	$$
	where $\mathbb{B}^{\dagger}_{k} = \hat{\mathbf{V}}^{niso\prime} \mathbf{Z}^{niso}_k \mathbf{W}^{niso}_k$, $\mathbb{F}^{\dagger}_{k1} = \mathbf{Z}^{niso\prime}_k \mathbf{X}^{niso}$, and $\mathbb{F}^{\dagger}_{kk} = \mathbf{Z}^{niso\prime}_k \mathbf{V}^{niso}$ for all $k \in \{1, 2\}$. The matrix $\mathbf{W}^{niso}_k$ is the GMM weighting matrix corresponding to $\mathbf{Z}^{niso}_k$, and $1 - \lambda_{2,k}$ is the coefficient associated with $\boldsymbol{x}_{s,i}^{\prime} \boldsymbol{\beta}_1$ in $\boldsymbol{\psi}_k$. The vector $\boldsymbol{u}^{\dagger}$ is given by
	$$\boldsymbol{u}^{\dagger} = \left(\mathbf{X}^{iso\prime} \boldsymbol{\varepsilon}^{iso},  ~\mathbf{Z}^{niso\prime}_1 \boldsymbol{\varepsilon}^{niso}, ~\mathbf{Z}^{niso\prime}_2 \boldsymbol{\varepsilon}^{niso}\right)\prime$$
	
	Applying a standard central limit theorem to $\boldsymbol{u}^{\dagger} / \sqrt{S}$, it follows that the asymptotic variance of $\sqrt{S} (\hat{\boldsymbol{\Gamma}}^{\dagger} - \boldsymbol{\Gamma}^{\dagger})$ is
	$$
	\left( \frac{\mathbb{B^{\dagger}}}{S}\, \frac{\mathbb{F}^{\dagger}}{S} \right)^{-1}
	\frac{\mathbb{B}^{\dagger}}{S} \, \mathbb{V}ar(\boldsymbol{u}^{\dagger}/\sqrt{S})
	\frac{\mathbb{B}^{\dagger\prime}}{S} \,
	\left( \frac{\mathbb{F}^{\dagger\prime}}{S} \,\frac{\mathbb{B}^{\dagger\prime}}{S} \right)^{-1},
	$$
	where $\mathbb{V}ar(\boldsymbol{u}^{\dagger} / \sqrt{S})$ denotes the asymptotic variance of $\boldsymbol{u}^{\dagger} / \sqrt{S}$. From this expression, the variance of $\sqrt{S} (\hat{\boldsymbol{\psi}}_1 - \hat{\boldsymbol{\psi}}_2)$ can be consistently estimated using the Delta method.

	%\bigskip
	%\bigskip
	%{\linespread{1}
		%\fontsize{11}{14}\selectfont
		%\bibliographyoa{Referencesoa}
		%\bibliographystyleoa{ecta}}
\end{document}